\DeclareMathOperator{\argmin}{\arg\min}
\newtheorem{theorem}{Theorem}
\newtheorem{lemma}{Lemma}
\begin{document}

%
\runningtitle{Learning Personalized Item-to-Item Recommendation Metric}

%
\runningauthor{Hoang, Deoras, Zhao, Li, Karypis}

\twocolumn[

\aistatstitle{Learning Personalized Item-to-Item Recommendation\\ Metric via Implicit Feedback}

\aistatsauthor{\quad\quad\quad\quad Trong Nghia Hoang, Anoop Deoras \And \quad\quad\quad\quad\quad Tong Zhao, Jin Li \And \quad\quad\quad\quad George Karypis}

\aistatsaddress{\quad\quad\quad\quad AWS AI, Amazon \And \quad\quad\quad\quad\quad Personalization, Amazon \And \quad\quad\quad\quad\quad AWS AI, Amazon} ]

\begin{abstract}
\vspace{-3mm}
This paper studies the item-to-item recommendation problem in recommender systems from a new perspective of metric learning via implicit feedback. We develop and investigate a personalizable deep metric model that captures both the internal contents of items and how they were interacted with by users. There are two key challenges in learning such model. First, there is no explicit similarity annotation, which deviates from the assumption of most metric learning methods. Second, these approaches ignore the fact that items are often represented by multiple sources of meta data and different users use different combinations of these sources to form their own notion of similarity.

\noindent To address these challenges, we develop a new metric representation embedded as kernel parameters of a probabilistic model. This helps express the correlation between items that a user has interacted with, which can be used to predict user interaction with new items. Our approach hinges on the intuition that similar items induce similar interactions from the same user, thus fitting a metric-parameterized model to predict an implicit feedback signal could indirectly guide it towards finding the most suitable metric for each user. To this end, we also analyze how and when the proposed method is effective from a theoretical lens. Its empirical effectiveness is also demonstrated on several real-world datasets.\vspace{-3mm}
\end{abstract}

\section{Introduction}
\label{sec:intro}\vspace{-2mm}
Item recommendation is one of the fundamental tasks in a recommender system which is applicable to many scenarios such as \emph{you may also like} on e-commerce platforms (e.g., Amazon, Alibaba) or \emph{because you watched} on content streaming services (e.g., Netflix). These include two specific use cases of (1) user-centric and (2) item-centric recommendations. In user-centric recommendation, the focus is on recommending items that fit best with a profile of a target user. In item-centric recommendation, the focus is instead on recommending items that are similar to a target item.

\noindent To date, most recommendation methods have focused on the user-centric context of recommending relevant items to a target user. However, such user-centric methods are often not suitable for item-centric use cases since (as mentioned above) their focus is to find items that fit best with a user's profile but might not necessarily be similar to the target item that the user is currently interested in. To the best of our knowledge, there are very few works devised to tackle item-item recommendation directly. Most notable works among those are \emph{sparse linear method} (SLIM) \citep{Karypis2011}, which was adapted from user-centric collaborative filtering (CF) methods \citep{he2017neural,Salakhutdinov2007,Karypis2011,Salakhutdinov2008,Yu2009}, and \emph{semi-parametric embedding} (SPE) \citep{Li2019}, which combines elements of both CF- and content-based methods \citep{Lops2011,Roy2000,Billsus2007}. But, SLIM \citep{Karypis2011} does not make use of meta information while \emph{semi-parametric embedding} (SPE) \citep{Li2019} ignores different similarity notions that we mentioned above. In both cases, there is a need to develop a personalizable item-to-item distance metric that not only capture the similarities between items across different sources of meta data but also \emph{how} these are perceived by different users.

\noindent This leads us to the problem of metric learning \citep{Weinshall2003,Davis2005,Kulis2012,Lebanon2003,Xing2003} which aims to learn a distance measure on the feature space of items that can capture well the semantic similarities of items on their original input space \citep{Chopra2005,Goldberger2005,Zhou2017,Yang2018} (see Section~\ref{sec:related}). However, most (deep) metric learning methods were developed outside the context of a practical recommendation system \citep{guo2017deepfm,he2017neural,koren2009matrix,Karypis2011,sedhain2015autorec} where class labels or even co-view signals of commercial items are not fine-grained enough to determine whether two items are similar or not. For example, two movies might belong to the same genre but they are not considered similar due to other traits such as cast, producer and plot. On the other hand, co-viewed movies might be driven by an exploration behavior rather than by their intrinsic similarities \citep{Mary18}. This invalidates vanilla application of existing supervised metric learning algorithms. 

Furthermore, users tend to have different preferences in forming their own notion of similarity from different meta information channel of commercial items. For example, some users would consider movies from the same genres or thematic content to be similar, while others might prefer movies from specific cast. This invalidates direct uses of unsupervised methods seeking to preserve a single geometry of item-to-item metric since there can be as many as the no. of different users.

\noindent {\bf Motivation.} This motivates us to develop a multi-channel metric representation that can be learned via implicit feedback on how good it is towards a downstream prediction task. To achieve this, we first propose and investigate an ensemble construction of multiple Siamese Twin segments \citep{Chopra2005} such that each segment comprises two identical towers that capture the embedded similar or dissimilar traits between two input items for each source of meta data describing a certain aspect of their internal contents. As there is no direct feedback to train this metric ensemble, we use a surrogate prediction task to provide a self-supervised feedback for training, which can also be used to personalize the metric for a user.

\noindent {\bf Key Idea $\&$ Contributions.} This is achieved by embedding the ensemble representation as part of the kernel parameters expressing the correlation between items within a prediction model, which is fitted to predict how a user would interact with an item based on its correlation with previous items that the user has interacted with. Our approach hinges on the intuition that similar items would induce similar interactions from the same user, thus learning an interaction (e.g., rating) prediction model based on the metric representation can implicitly guide it towards capturing the right metric for each user. In particular, we contribute:

\noindent {\bf 1.} An adapted Gaussian process (GP) \citep{Rasmussen06} regression model whose kernel function is parameterized by an ensemble of Siamese Twin segments (Section~\ref{sec:method-ssl}). The GP model is fitted to predict the average user rating of an unseen item given items with observed ratings. As the correlation is expressed in terms of metric representation, a well-fitted GP would be encouraged to find a well-behaved metric that correctly preserves the averaged similarity geometry of items.

\noindent {\bf 2.} A personalization scheme that warps the averaged similarity geometry of items into a personalized one that better fits each specific user (Section~\ref{sec:method-personalize}) via optimizing the combining coefficient of the ensemble. This makes sense since the content extracted from different meta-data channels is user-agnostic, leaving only the combining parameters user-dependent.

\noindent {\bf 3.} A theoretical analysis (Section~\ref{sec:theory}) that analyzes the effectiveness of the proposed self-supervised metric learning algorithm in terms of the statistical relevance between the surrogate prediction task (e.g., rating prediction) and the true (unknown) item-to-item metric. Our results (Theorems~\ref{theo:1} and~\ref{theo:2}) show that under reasonable assumptions, the learned metric is close to the true metric with high probability if there is a sufficient amount of observations from the surrogate function.

\noindent {\bf 4.} An empirical evaluation of the proposed method on an experimentation benchmark comprising the several public datasets such as the MovieLens \citep{harper2015movielens} and Yelp Review datasets (Section~\ref{sec:exp}).

\section{Related Work}
\label{sec:related}

\subsection{Metric Learning and Siamese Network}
\label{sec:siamese}
One prominent line of research in metric learning focuses on supervised methods \citep{Weinshall2003,Domeniconi2002,Hastie1996,Tsang2003,Lebanon2003,Xing2003,Kwok2005,Yeung2003} which assume there exist training examples of similar and dissimilar items are available. The metric learning task is thus reduced to learning a scoring function that pushes down on similar pairs while pushing up on dissimilar pairs. One notable example of such metric learning method is the Siamese network which can be learned via optimizing a contrastive loss.

{\bf Siamese Network.} As developed in \citep{Chopra2005}, Siamese network has a two-tower architecture that was specifically devised for contrastive learning. In a nutshell, a Siamese net is expected to accept a pair of input items $(\mathbf{x}_a,\mathbf{x}_b)$ and output a numeric distance between them or a probability that they are dissimilar. For a pair of similar items, we expect this distance or probability to be small and conversely, for dissimilar items, we expect it to be above a certain margin.

To achieve this, the Siamese net has two identical network segments, $\mathbf{F}_a(\mathbf{x}; \boldsymbol{\gamma}) \equiv \mathbf{F}(\mathbf{x}; \boldsymbol{\gamma})$ and $\mathbf{F}_b(\mathbf{x}; \boldsymbol{\gamma}) \equiv \mathbf{F}(\mathbf{x}; \boldsymbol{\gamma})$, whose outputs, $\mathbf{z}_a = \mathbf{F}(\mathbf{x}_a; \boldsymbol{\gamma})$ and $\mathbf{z}_b = \mathbf{F}(\mathbf{x}_b; \boldsymbol{\gamma})$ reside in a metric space $\mathbb{R}^p$ equipped with a parameterized distance $\mathbf{D}(\mathbf{z}_a,\mathbf{z}_b) = (\mathbf{z}_a - \mathbf{z}_b)^\top\boldsymbol{\Lambda}(\mathbf{z}_a - \mathbf{z}_b)$ where $\boldsymbol{\Lambda} = \mathrm{diag}[\lambda_1, \lambda_2, \ldots, \lambda_p]$. Thus, given a pair $(\mathbf{x}_a,\mathbf{x}_b)$, the output of the Siamese net is $\mathbf{D}(\mathbf{x}_a,\mathbf{x}_b) = $
\begin{eqnarray}
\hspace{-0.5mm}\left(\mathbf{F}\Big(\mathbf{x}_a; \boldsymbol{\gamma}\Big) \hspace{-0.5mm}-\hspace{-0.5mm} \mathbf{F}\Big(\mathbf{x}_b; \boldsymbol{\gamma}\Big)\right)^{\hspace{-1mm}\top} \hspace{-1mm}\boldsymbol{\Lambda} \left(\mathbf{F}\Big(\mathbf{x}_a; \boldsymbol{\gamma}\Big) \hspace{-0.5mm}-\hspace{-0.5mm} \mathbf{F}\Big(\mathbf{x}_b; \boldsymbol{\gamma}\Big)\right)\label{eq:4}
\end{eqnarray}
Then, suppose training examples $((\mathbf{x}^{i}_a,\mathbf{x}^{i}_b), y^i_{ab})_{i=1}^n$ are available where $y^i_{ab} = 0$ indicate $(\mathbf{x}_a^{i}, \mathbf{x}_b^{i})$ is a pair of similar items and otherwise for $y^i_{ab} = 1$. The parameterization of the metric net, $\boldsymbol{\gamma}$ and $\boldsymbol{\Lambda}$, can be learned via optimizing the following contrastive loss,
\begin{eqnarray}
\hspace{-6.5mm}\mathbf{L}\Big(\boldsymbol{\gamma}, \boldsymbol{\Lambda}\Big)&=&\sum_{i=1}^n \Bigg[\Big(1 - y^i_{ab}\Big) \mathbf{D}\left(\mathbf{x}^{i}_a, \mathbf{x}^{i}_b\right)\Bigg]\nonumber \\
&+& \sum_{i=1}^n \Bigg[y^i_{ab} \max\Big(0, \tau - \mathbf{D}\left(\mathbf{x}^{i}_a, \mathbf{x}^{i}_b\right)\Big)\Bigg]\label{eq:5}
\end{eqnarray}
where $\tau$ is a contrastive margin such that the distance for dissimilar pairs are encouraged to be increased up to $\tau$ but not more than that. This implies forcing the distance between a dissimilar pair to be more than $\tau$ only yields diminishing gain in improving the discriminative capacity of the model. Thus, the loss is zeroed out in such cases to focus on minimizing the gap between the other similar pairs.

\subsection{Gaussian Processes}
\label{sec:gp}
A Gaussian process \citep{Mackay1998IntroductionTG} defines a probabilistic prior over a random function $g(\mathbf{x})$. This prior is in turn defined by a mean function $m(\mathbf{x}) = 0$\footnote{For simplicity, we assume a zero mean function since we can always re-center the training outputs around $0$.} and a kernel function $k(\mathbf{x}, \mathbf{x}')$. Such prior stipulates that for an arbitrary finite subset of inputs  $\{\mathbf{x}_1, \mathbf{x}_2, \ldots, \mathbf{x}_n\}$, the corresponding output vector $\mathbf{g} = [g(\mathbf{x}_1)\ g(\mathbf{x}_2) \ldots\  g(\mathbf{x}_n)]^\top$ is distributed by a multivariate Gaussian, $\mathbf{g} \sim \mathbb{N}(\mathbf{0}, \mathbf{K})$. 

Here, the entries of the covariance matrix $\mathbf{K}$ are computed using the aforementioned kernel function. That is, $\mathbf{K}_{ij} = k(\mathbf{x}_i, \mathbf{x}_j)$ where common examples of $k(\mathbf{x}_i,\mathbf{x}_j)$ are detailed in \citep{Rasmussen06} but in practice, the exact choice of the kernel function usually depends on the application. To predict with GP, let $\mathbf{x}_\ast$ be an unseen input whose corresponding output $g_\ast = g(\mathbf{x}_\ast)$ we wish to predict. Then, assuming a noisy setting where we only observe a noisy observation $r(\mathbf{x}) \sim \mathbf{N}(g(\mathbf{x}), \sigma^2)$ instead of $g(\mathbf{x})$ directly, the predictive distribution of $g_\ast$ is:
\begin{eqnarray}
\hspace{-2mm}g_\ast \ \triangleq\ g(\mathbf{x}_\ast) \mid \mathbf{r} &\sim& \mathbb{N}\Big(\mathbf{k}_\ast^\top(\mathbf{K} + \sigma^2\mathbf{I})^{-1}\mathbf{r},\  k(\mathbf{x}_\ast,\mathbf{x}_\ast) \nonumber\\
&-& \mathbf{k}_\ast^\top(\mathbf{K} + \sigma^2\mathbf{I})^{-1}\mathbf{k}_\ast\Big) \ ,\label{eq:2}
\end{eqnarray}
where $\mathbf{r} = [r(\mathbf{x}_1) \ldots r(\mathbf{x}_n)]^\top$. The defining parameter $\boldsymbol{\psi}$ of $k(\mathbf{x},\mathbf{x}')$ is crucial to the predictive performance and needs to be optimized via minimizing the negative log likelihood (NLL) of $\mathbf{r}$,
\begin{eqnarray}
\hspace{-8mm}\ell(\boldsymbol{\psi}) \hspace{-2mm}&\propto&\hspace{-2mm} \frac{1}{2}\log \Big|\mathbf{K}_{\boldsymbol{\psi}} + \sigma^2\mathbf{I}\Big| + \frac{1}{2}\mathbf{r}^\top\Big(\mathbf{K}_{\boldsymbol{\psi}} + \sigma^2\mathbf{I}\Big)^{-1}\mathbf{r} \label{eq:3}
\end{eqnarray}
In the above, we use the subscript $\boldsymbol{\psi}$ to indicate that $\mathbf{K}$ is parameterized by $\boldsymbol{\psi}$ which, in our case, 
is the parameterization of a Siamese network (Section~\ref{sec:siamese}). In practice, both training $\boldsymbol{\Theta}$ and prediction incur $\mathbf{O}(n^3)$ cost. For better scalability, there have been numerous developments on sparse GPs \citep{Titsias09,Miguel10,Hensman13,NghiaICML15,NghiaICML16,Hoang2020} whose computation are only linear in $n$ (see Appendix~\ref{app:e}).

\section{Metric Learning via Gaussian Process with Siamese Kernel}
\label{sec:method}
We will formalize our intuition (Section~\ref{sec:intro}) of self-supervised learning (SSL) of item-to-item metric, namely the \emph{SSL metric}, in Section~\ref{sec:method-ssl}. We will then show how such \emph{SSL metric} can also be personalized for each user via minimizing a new loss function as proposed in Section~\ref{sec:method-personalize}. 

\subsection{Self-Supervised Metric Learning with Gaussian Processes}
\label{sec:method-ssl}
Let $r(\mathbf{x})$ denote a related prediction target of item $\mathbf{x}$ whose training examples $\{(\mathbf{x}_i, r(\mathbf{x}_i))\}_{i=1}^n$ are readily available from our data. For example, $r(\mathbf{x})$ can be an averaged review score for $\mathbf{x}$, which aggregates the ratings given to $\mathbf{x}$ by the users who interacted with it. Our goal is to build a prediction model such that for an unseen item $\mathbf{x}_\ast$, its prediction $\widehat{r}(\mathbf{x}_\ast)$ is largely based on its metric-based correlation with the training items $\mathbf{x}_1, \mathbf{x}_2, \ldots, \mathbf{x}_n$. This is the standard prediction pattern of kernel-based methods such as Gaussian process (GP). To substantiate this, we parameterize the kernel function of the GP prior using the aforementioned Siamese network (Section~\ref{sec:siamese}) as detailed below,
\begin{eqnarray}
k\Big(\mathbf{x}_a, \mathbf{x}_b; \boldsymbol{\psi}\Big) &=& \mathrm{exp}\left(-\frac{1}{2}\mathbf{D}\left(\mathbf{x}_a, \mathbf{x}_b\right)\right)\label{eq:6}
\end{eqnarray}
where $\mathbf{D}(\mathbf{x}_a,\mathbf{x}_b)$ is defined in Eq.~\eqref{eq:4}. Here, the kernel parameterization $\boldsymbol{\psi} = \{\boldsymbol{\Lambda},\boldsymbol{\gamma}\}$ consists of two parts. First, $\boldsymbol{\gamma}$ denotes the defining parameters of the network segment that maps $\mathbf{x}$ to a vector in a metric space. Second, $\boldsymbol{\Lambda}$ specifies the correlation and unit scales across different dimensions of the metric space. For example, if we choose $\boldsymbol{\Lambda}$ to be the diagonal matrix, then the dimensions of the metric space are uncorrelated and their unit scales are the elements on the diagonal of $\boldsymbol{\Lambda}$. Then, given training examples $\{(\mathbf{x}_i, r_i)\}_{i=1}^n$ where $r_i \sim \mathbb{N}(r(\mathbf{x}_i), \sigma^2)$ are the noisy observations of $r(\mathbf{x}_i)$ perturbed with Gaussian noises, the metric parameters can be optimized via minimizing
\begin{eqnarray}
\hspace{-6.5mm}\ell(\boldsymbol{\psi}) \hspace{-2mm}&=&\hspace{-2mm} \frac{1}{2}\log \Big|\mathbf{K}_{\boldsymbol{\psi}} + \sigma^2\mathbf{I}\Big| + \frac{1}{2}\mathbf{r}^\top\Big(\mathbf{K}_{\boldsymbol{\psi}} + \sigma^2\mathbf{I}\Big)^{-1}\mathbf{r} \label{eq:7}
\end{eqnarray}
with respect to $\boldsymbol{\psi}$ and $\sigma$ where $\mathbf{r} = [r_1\ r_2 \ldots r_n]^\top$ and $\boldsymbol{\psi} \triangleq \{\boldsymbol{\Lambda}, \boldsymbol{\gamma}\}$. Here, Eq.~\eqref{eq:7} is the same as Eq.~\eqref{eq:3} except for that entries of $\mathbf{K}_{\boldsymbol{\psi}}$ were computed by Eq.~\eqref{eq:6} above. We will demonstrate later in Section~\ref{sec:exp} that training $\{\boldsymbol{\Lambda},\boldsymbol{\gamma}\}$ using Eq.~\eqref{eq:7} is more effective than fitting them using Eq.~\eqref{eq:5} which requires direct feedback that cannot be acquired without incurring considerable label noise. 

{\bf Key Result:} We will also show in Section~\ref{sec:theory} below (see Theorem~\ref{theo:1} and Theorem~\ref{theo:2}) that assuming a certain statistical relationship (see {\bf A1-A3}) between $\mathbf{r}$ and the true metric $\mathbf{D}_\ast(\mathbf{x},\mathbf{x}')$, the learned \emph{SSL metric} $\mathbf{D}(\mathbf{x}, \mathbf{x}')$ via fitting Eq.~\eqref{eq:7} is arbitrarily close to $\mathbf{D}_\ast(\mathbf{x}, \mathbf{x}')$ if the number of observations $n$ is sufficiently large.

\subsection{Learning Personalizable Metric with Gaussian Processes}
\label{sec:method-personalize}
Furthermore, to account for multiple different sources of meta data describing an item that lead to different user preferences over their uses in combination, we first extend the above Siamese network architecture into an ensemble construction of multiple Siamese segments. 

{\bf Siamese Ensemble.} Let $\mathbf{x} = (\mathbf{x}^{(1)},\mathbf{x}^{(2)}, \ldots, \mathbf{x}^{(p)})$ denote its multi-view representation across $p$ different meta channels where $\mathbf{x}^{(i)}$ denote $\mathbf{x}$'s description in channel $i$. The Siamese ensemble is $\mathbf{D}\left(\mathbf{x}_a, \mathbf{x}_b\right) =$
\begin{eqnarray}
\sigma\Bigg(\left[\mathbf{D}_1\left(\mathbf{x}^{(1)}_a, \mathbf{x}^{(1)}_b\right), \ldots, \mathbf{D}_p\left(\mathbf{x}_a^{(p)}, \mathbf{x}^{(p)}_b\right)\right]^\top \mathbf{h} + b\Bigg) \label{eq:8}
\end{eqnarray}
where $\sigma(z) = 1 / (1 + \mathrm{exp}(-z))$ denotes the sigmoid function while $\mathbf{w} = (\mathbf{h}, b)$ where $\mathbf{h} \in \mathbb{R}^p$ and $b \in \mathbb{R}$ are learnable weights that aggregate the individual Siamese distances into a single distance metric. In our personalized context, the metric computation for each user $u$ shares the same set of Siamese individual distance functions (and their parameterizations) but differs in how these individual Siamese distances were combined via different choices of $\mathbf{w} \leftarrow \mathbf{w}_u = (\mathbf{h}_u, b_u)$.

{\bf Learning Personalizable Metric.} First, we observe that the parameterization $\boldsymbol{\psi}_i = (\boldsymbol{\gamma}_i,\boldsymbol{\Lambda}_i)$ of each Siamese segment $\mathbf{D}_i$ is user-agnostic since the intrinsic similarities between items across single channels are not user-dependent. The personalization must therefore concern only ensemble parameterization $\mathbf{w} \leftarrow \mathbf{w}_u$. This raises the question of how can we build a personalizable parameterization which can be fast adapted to an arbitrary user with limited personal data?

To address this question, one ad-hoc choice is to reuse the self-supervised learning recipe in Section~\ref{sec:method-ssl} to optimize for $\mathbf{w} = (\mathbf{h}, b)$, which can be achieved by re-configuring the kernel function $k(\mathbf{x}, \mathbf{x}')$ in Eq.~\eqref{eq:6} with the ensemble distance $\mathbf{D}$ in Eq.~\eqref{eq:8} and minimizing the following NLL loss $\ell(\boldsymbol{\psi},\mathbf{w}) =$
\begin{eqnarray}
\hspace{-2mm}\frac{1}{2}\log \Big|\mathbf{K}_{(\boldsymbol{\psi}, \mathbf{w})} + \sigma^2\mathbf{I}\Big| +  \frac{1}{2}\mathbf{r}^\top\Big(\mathbf{K}_{(\boldsymbol{\psi},\mathbf{w})} + \sigma^2\mathbf{I}\Big)^{-1}\mathbf{r} \ .\label{eq:m1}
\end{eqnarray}
with respect to $\mathbf{w}$, while fixing $\boldsymbol{\psi} = (\boldsymbol{\psi}_1, \ldots, \boldsymbol{\psi}_p)$. The resulting $\mathbf{w}$ can then be re-fitted for each user $u$ via another pass of the algorithm in Section~\ref{sec:method-ssl} using only observations $\mathbf{r}_u$ from $u$'s individual surrogate function $r_u(\mathbf{x})$. This approach however does not optimize for how fast $\mathbf{w}$ can be adapted (on average) for a random user $u$ with limited data. To account for this, we instead minimize the following \emph{post-update, personalized} loss function over $q$ users -- see its intuition below Eq.~\eqref{eq:m3},
\begin{eqnarray}
\mathbf{L}(\mathbf{w}) &=& \frac{1}{q}\sum_{u=1}^q\Big[\ell_u(\boldsymbol{\psi}, \kappa_u(\mathbf{w}))\Big] \label{eq:m2}
\end{eqnarray}
where $\ell(\boldsymbol{\psi}, \mathbf{w})$ is defined in Eq.~\eqref{eq:m1} above and $\ell_u(\boldsymbol{\psi}, \kappa_u(\mathbf{w}))$ is identical in form to $\ell(\boldsymbol{\psi})$ except for the fact that it is parameterized by $\mathbf{w}_u = \kappa_u(\mathbf{w})$ (instead of $\mathbf{w}$) and computed based on local observations $\mathbf{r}_u$ (instead of $\mathbf{r}$). Here, $\kappa_u(\mathbf{w})$ denotes a personalization procedure that minimizes $\ell_u(\boldsymbol{\psi}, \mathbf{w})$ for each user $u$, which can be represented in the following form,
\begin{eqnarray}
\hspace{-12mm}\kappa_u(\mathbf{w}) \ =\ \kappa_u^{(t)}(\mathbf{w}) &\triangleq&\kappa_u^{(t-1)}(\mathbf{w}) \nonumber\\
&+& \omega \cdot  \nabla_{\mathbf{w}}\ell\left(\kappa_u^{(t-1)}(\mathbf{w})\right) \label{eq:m3}
\end{eqnarray}
and $\kappa_u^{(0)}(\mathbf{w}) = \mathbf{w}$. Here, we drop $\boldsymbol{\psi}$ from the argument of $\ell_u(\boldsymbol{\psi},\mathbf{w})$ since it is clear from context and is fixed. This encompasses the $t$-step gradient update procedure that aims to numerically minimize $\ell_u(\mathbf{w})$ with $\mathbf{w}$ being the initializer and $\omega$ denote the learning rate. Intuitively, minimizing Eq.~\eqref{eq:m2} means finding a vantage point $\mathbf{w}$ that are most effective for personalization. That is, starting at $\mathbf{w}$, the local update procedure $\kappa_u(\mathbf{w})$ can arrive at an effective parameter configuration that reduces the local loss $\ell_u(\mathbf{w})$ the most. This generic form, however, poses a challenge since the gradient $\nabla_{\mathbf{w}}\mathbf{L}$ might not be tractable since $\kappa_u(\mathbf{w})$ might not exist in closed-form. 

{\bf Key Idea:} To address this, note that the vector-value function $\kappa_u(\mathbf{w})$ can be represented as $\kappa_u(\mathbf{w}) = [\kappa_u^1(\mathbf{w}) \ldots \kappa_u^{p + 1}(\mathbf{w})]$ where we have $\mathrm{dim}(\mathbf{w}) = \mathrm{dim}(\mathbf{h}) + \mathrm{dim}(b) = p + 1$. We can then approximate each component with a $2^{\mathrm{nd}}$-order Taylor expansion around $\mathbf{0}$ and show that under such expansion, $\nabla_{\mathbf{w}}\mathbf{L}$ can be computed, which allows $\mathbf{L}$ to be minimized via Lemma~\ref{lem:0}.

\begin{lemma}
\label{lem:0}
Assuming $\kappa_u^i(\mathbf{w})$ is twice-differentiable at $\mathbf{w} = \mathbf{0}$, the approximation of $\kappa_u^i(\mathbf{w})$ with its $2^{\mathrm{nd}}$-order Taylor expansion around $\mathbf{w} = \mathbf{0}$ induces $\nabla_{\mathbf{w}}\mathbf{L}(\mathbf{w}_+) = $
\begin{eqnarray}
\hspace{-15mm}\frac{1}{q}\sum_{u=1}^{q}\Bigg(\Big[\mathbf{D}_{\mathbf{w}}\kappa_u(\mathbf{w}_+)\Big]\nabla_{\mathbf{w}}\ell_u\Big(\kappa_u(\mathbf{w}_+)\Big)\Bigg) \label{eq:m4}
\end{eqnarray}
with $\mathbf{D}_{\mathbf{w}}\kappa_u(\mathbf{w}_+) \triangleq \Big[\nabla^\top_{\mathbf{w}}\kappa_u^1(\mathbf{w}_+); \ldots; \nabla^\top_{\mathbf{w}}\kappa_u^{p + 1}(\mathbf{w}_+)\Big]$ whose rows are approximated via
\begin{eqnarray}
\hspace{-13mm}\nabla_{\mathbf{w}}\kappa_u^i\Big(\mathbf{w}_+\Big) \hspace{-2mm}&\simeq&\hspace{-2mm} \nabla_{\mathbf{w}}\kappa_u^i\Big(\mathbf{0}\Big) +   \left[\nabla^2_{\mathbf{w}}\kappa_u^i\Big(\mathbf{0}\Big)\right]\mathbf{w}_+ \label{eq:m5}
\end{eqnarray}
\end{lemma}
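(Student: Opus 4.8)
The plan is to view $\mathbf{L}$ as an average of compositions $\ell_u \circ \kappa_u$, differentiate each summand by the multivariate chain rule to obtain the exact form Eq.~\eqref{eq:m4}, and then isolate the single intractable ingredient — the Jacobian of the unrolled map $\kappa_u$ — and replace it by the gradient of a second-order Taylor surrogate. First I would invoke linearity of the gradient to write $\nabla_{\mathbf{w}}\mathbf{L}(\mathbf{w}_+) = \frac{1}{q}\sum_{u=1}^q \nabla_{\mathbf{w}}\big[\ell_u\big(\kappa_u(\mathbf{w}_+)\big)\big]$. Each summand is a composition of the vector-valued map $\kappa_u : \mathbb{R}^{p+1}\to\mathbb{R}^{p+1}$ with the scalar loss $\ell_u$, so the chain rule factors its gradient into the Jacobian of $\kappa_u$ at $\mathbf{w}_+$ — the matrix whose $i$-th row is $\nabla^\top_{\mathbf{w}}\kappa_u^i(\mathbf{w}_+)$, i.e. exactly the $\mathbf{D}_{\mathbf{w}}\kappa_u(\mathbf{w}_+)$ of the statement — times the loss gradient $\nabla_{\mathbf{w}}\ell_u$ evaluated at the personalized point $\kappa_u(\mathbf{w}_+)$. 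Assembling the $q$ summands reproduces Eq.~\eqref{eq:m4}. This step is exact and invokes no approximation yet; its only content is correctly identifying the row structure of $\mathbf{D}_{\mathbf{w}}\kappa_u$.

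The remaining difficulty is that the rows $\nabla_{\mathbf{w}}\kappa_u^i(\mathbf{w}_+)$ are intractable at a general $\mathbf{w}_+$: since $\kappa_u$ is the output of the $t$-step update unrolled in Eq.~\eqref{eq:m3}, evaluating its Jacobian afresh at every candidate $\mathbf{w}_+$ would require back-propagating through the entire inner optimization. To bypass this, I would replace each component $\kappa_u^i$ by its second-order Taylor polynomial about the anchor $\mathbf{0}$, namely $P_u^i(\mathbf{w}) = \kappa_u^i(\mathbf{0}) + \nabla_{\mathbf{w}}\kappa_u^i(\mathbf{0})^\top\mathbf{w} + \tfrac{1}{2}\mathbf{w}^\top[\nabla^2_{\mathbf{w}}\kappa_u^i(\mathbf{0})]\mathbf{w}$, which is well defined under the assumed twice-differentiability at $\mathbf{0}$. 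Differentiating this quadratic exactly gives $\nabla_{\mathbf{w}}P_u^i(\mathbf{w}) = \nabla_{\mathbf{w}}\kappa_u^i(\mathbf{0}) + [\nabla^2_{\mathbf{w}}\kappa_u^i(\mathbf{0})]\mathbf{w}$, and evaluating at $\mathbf{w}_+$ yields precisely the approximate row in Eq.~\eqref{eq:m5}. The point I would stress is that differentiation of a quadratic is exact, so the approximate row is affine in $\mathbf{w}_+$ and depends only on the gradient and Hessian of $\kappa_u^i$ at the single fixed point $\mathbf{0}$ — quantities that can be pre-computed once and reused for every $\mathbf{w}_+$. Substituting these affine rows back into $\mathbf{D}_{\mathbf{w}}\kappa_u(\mathbf{w}_+)$ in Eq.~\eqref{eq:m4} then closes the argument.

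I expect the main obstacle to be bookkeeping rather than analysis. The delicate part is keeping the Jacobian–gradient orientation consistent with the row-stacking convention fixed in the statement so that the matrix–vector product in Eq.~\eqref{eq:m4} is well typed, and articulating why the second-order expansion is the correct order: a first-order surrogate would collapse each Jacobian row to the constant $\nabla_{\mathbf{w}}\kappa_u^i(\mathbf{0})$ and erase all dependence on $\mathbf{w}_+$, whereas the term $[\nabla^2_{\mathbf{w}}\kappa_u^i(\mathbf{0})]\mathbf{w}_+$ is exactly what retains a linearized sensitivity of the personalized Jacobian to the meta-parameter being optimized. A secondary point worth a sentence is to note that the approximation is applied only to the maps $\kappa_u^i$, not to $\ell_u$, so that the loss gradient $\nabla_{\mathbf{w}}\ell_u(\kappa_u(\mathbf{w}_+))$ retains its exact form in Eq.~\eqref{eq:m4}.
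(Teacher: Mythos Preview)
Your proposal is correct and matches the paper's own proof essentially step for step: the paper also obtains Eq.~\eqref{eq:m4} by applying the chain rule to each summand of Eq.~\eqref{eq:m2}, and then derives Eq.~\eqref{eq:m5} by writing out the second-order Taylor expansion of $\kappa_u^i$ about $\mathbf{0}$ and differentiating the resulting quadratic. Your additional remarks on Jacobian orientation and on why a first-order surrogate would be inadequate are sound elaborations but go beyond what the paper records.
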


Lemma~\ref{lem:0} implies that if $\nabla_{\mathbf{w}}\ell(\mathbf{w}_+)$ and  $\nabla_{\mathbf{w}}\ell_u(\mathbf{w}_+)$ are tractable; and $\nabla_{\mathbf{w}}\kappa_u^i(\mathbf{w}_+)$ and $\nabla^2_{\mathbf{w}}\kappa_u^i(\mathbf{w}_+)$ are tractable at $\mathbf{w} = \mathbf{0}$ then the gradient of $\mathbf{L}(\mathbf{w})$ is also approximately tractable at any $\mathbf{w}_+$, thus mitigating the lack of a closed-form expression for $\kappa_u(\mathbf{w})$. Here, the tractability of $\nabla_{\mathbf{w}}\ell(\mathbf{w}_+)$ and $\nabla_{\mathbf{w}}\ell_u(\mathbf{w}_+)$ is evident from their analytic form in Eq.~\eqref{eq:m1} while the tractability of $\nabla_{\mathbf{w}}\kappa_u^i(\mathbf{w}_+)$ and $\nabla^2_{\mathbf{w}}\kappa_u^i(\mathbf{w}_+)$ at $\mathbf{w}_+ = \mathbf{0}$ along with the rest of the proof is deferred to Appendix~\ref{app:g}. 

Note that for simple choices of $\kappa_u(\mathbf{w})$, $\mathbf{D}_{\mathbf{w}}\kappa_u(\mathbf{w}_+)$ can be computed analytically to bypass this approximation. For instance, in our experiment, we choose $\kappa_u(\mathbf{w}) = \mathbf{w} - \omega\nabla_{\mathbf{w}}\ell_u(\mathbf{w})$. It then follows that $\mathbf{D}_{\mathbf{w}}\kappa_u(\mathbf{w}_+) = \mathbf{I} - \omega\nabla^2_{\mathbf{w}}\ell_u(\mathbf{w}_+)$ which is exact and tractable. This leads to a simpler expression for Eq.~\eqref{eq:m4}, which mimics the update equation in meta learning \citep{Finn17}.

\section{Theoretical Analysis}
\label{sec:theory}
This section analyzes the effectiveness of the proposed self-supervised metric learning algorithm (Section~\ref{sec:method-ssl}) from a theoretical lens, which aims to shed insights on when and how the induced metric approximates accurately. In essence, our main results, Theorems~\ref{theo:1} and~\ref{theo:2}, show that under reasonable assumptions, the induced metric $\mathbf{D}(\mathbf{x},\mathbf{x}')$ of our algorithm is arbitrarily close to the true (unknown) metric $\mathbf{D}_\ast(\mathbf{x},\mathbf{x}')$ with high probability if it can observe a sufficiently large number $n$ of observations from the surrogate training feedback $r(\mathbf{x})$. Our assumptions are first stated below.

{\bf Assumptions.} Let $\mathbf{D}_\ast(\mathbf{x}, \mathbf{x}')$ denote the (unknown) true metric function and $k_\ast(\mathbf{x},\mathbf{x}') = \mathrm{exp}(-0.5\cdot\mathbf{D}_\ast(\mathbf{x}, \mathbf{x}'))$ denote the oracle kernel parameterized by the true metric. Then:

{\bf A1.} There exists a constant value $d > 0$ for which $\displaystyle d\cdot \mathrm{inf}\  k_\ast(\mathbf{x},\mathbf{x}') \geq \lambda_{\mathrm{max}}(\mathbf{K}_\ast)$ where $\lambda_{\max}(\mathbf{K}_\ast)$ denote the largest eigenvalue of the Gram matrix induced by $k_\ast$ on $\{\mathbf{x}_1, \mathbf{x}_2, \ldots, \mathbf{x}_n\}$.

{\bf A2.} Let $\widehat{\mathbf{r}} = [\widehat{r}(\mathbf{x}_1), \ldots, \widehat{r}(\mathbf{x}_n)]^\top$ denote the GP prediction of $\mathbf{r} = [r(\mathbf{x}_1), \ldots, r(\mathbf{x}_n)]^\top$ using the learned kernel function $k(\mathbf{x}, \mathbf{x}')$ -- see Eq.~\eqref{eq:6}\footnote{As it is clear from context, we drop the parameterization notation $\boldsymbol{\psi}$ from this point onward for simplicity.}, there exists a non-negative constant $\alpha$ s.t. $\underset{(\mathbf{x},\mathbf{x}')}{\mathrm{sup}}\ |k(\mathbf{x}, \mathbf{x}') - k_\ast(\mathbf{x}, \mathbf{x}')| \leq$
\begin{eqnarray}
\hspace{-25.5mm}1 &-& \alpha\ \cdot\ \Bigg(\Big(\mathbf{r} -  \widehat{\mathbf{r}}\Big)^\top\mathbf{A}\Big(\mathbf{r} -  \widehat{\mathbf{r}}\Big)\Bigg)^{-1} \label{eq:9}
\end{eqnarray}
where $\displaystyle\mathbf{A} = \frac{1}{n}\left(\mathbf{K} + \sigma^2\mathbf{I}\right)^2$ and $\mathbf{K}$ is the Gram matrix induced by $k(\mathbf{x},\mathbf{x}')$ on $\{\mathbf{x}_1, \mathbf{x}_2, \ldots, \mathbf{x}_n\}$.

{\bf A3.} Let $\mathbf{r} \ =\ [r(\mathbf{x}_1), r(\mathbf{x}_2), \ldots, r(\mathbf{x}_n)]^\top$ and $\mathbf{K}_\ast$ be defined as above. We assume $\mathbf{r} \sim \mathbb{N}(0, \mathbf{K}_\ast)$. This is key to establish our main results in Theorems 1 and 2.

{\bf Remark.} Here, assumptions $\mathbf{A1}$ and $\mathbf{A2}$ state that the oracle kernel is bounded from below ({\bf A1}) and the discrepancies between the approximate and oracle kernel are bounded above with a ceiling no more than $1$ ({\bf A2}). These are reasonable assumptions which can be realized in most cases given that by construction, the range of values for both kernel functions is between $0$ and $1$. Last, while $\mathbf{A3}$ imposes a stronger assumption on the statistical relationship between the surrogate training feedback $r(\mathbf{x})$ and oracle kernel $\mathbf{K}_\ast$, this is also not unreasonable given that in many situations, there also exists many feature signals that are both normally distributed and are directly related to the similarities across input instances, e.g. measurements of height/weight among people.

Under these assumptions, we can now state our key results which provably demonstrates how well the self-supervised learning metric can approximated the true metric, and under what conditions. Our strategy to address these questions are first described below. 

{\bf Analysis Strategy.} First, we aim to establish that if the maximum multiplicative error in approximating the oracle kernel (parameterized by the true metric) with the induced kernel (parameterized by the metric learned by our algorithm) can be made arbitrarily small, the same can also be said about the discrepancies between the true and learned metric (see Lemmas~\ref{lem:1} and~\ref{lem:2}). 

Then, we further establish that while the maximum multiplicative error between kernels is not always small with $100\%$ certainty, the probability that it is large is vanishingly small as the size of the dataset increases (Theorem~\ref{theo:1}). This implies the results of Lemmas~\ref{lem:1} and ~\ref{lem:2} can be invoked with high chance, guaranteeing that the metric discrepancies can be made vanishingly small with high probability. This is formalized in Theorem~\ref{theo:2}, which also details the least amount of data necessary for such event to happen.\vspace{2mm}

{\bf Formal Results.} To begin our technical analysis, we start with Lemma~\ref{lem:1} below which shows that if the ratio between the true and approximate kernel values, $k_\ast(\mathbf{x}, \mathbf{x}')$ and $k(\mathbf{x},\mathbf{x}')$, can be made arbitrarily close at $(\mathbf{x}, \mathbf{x}')$ then the approximated distance metric is also arbitrarily close at $(\mathbf{x}, \mathbf{x}')$.

\begin{lemma} 
\label{lem:1}
Suppose $\displaystyle (1 \ -\ \epsilon)\cdot k_\ast(\mathbf{x},\mathbf{x}') \ \ \leq\ \  k(\mathbf{x},\mathbf{x}')\ \ \leq\ \ (1 \ +\  \epsilon) \cdot k_\ast(\mathbf{x},\mathbf{x}')$ for $\epsilon \in (0, 1)$ then
\begin{eqnarray}
\hspace{-15mm}\Big|\mathbf{D}(\mathbf{x}, \mathbf{x}') - \mathbf{D}_\ast(\mathbf{x},\mathbf{x}')\Big| &\leq& 2\log\left(\frac{1}{1 - \epsilon}\right) \ . \label{eq:10}
\end{eqnarray}
\end{lemma}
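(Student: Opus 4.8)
The plan is to exploit the explicit exponential form of the kernel in Eq.~\eqref{eq:6}, which lets me recover the metric by inverting the exponential. Since $k(\mathbf{x}, \mathbf{x}') = \exp(-\tfrac{1}{2}\mathbf{D}(\mathbf{x}, \mathbf{x}'))$ and $k_\ast(\mathbf{x}, \mathbf{x}') = \exp(-\tfrac{1}{2}\mathbf{D}_\ast(\mathbf{x}, \mathbf{x}'))$, I have $\mathbf{D}(\mathbf{x}, \mathbf{x}') = -2\log k(\mathbf{x}, \mathbf{x}')$ and likewise $\mathbf{D}_\ast(\mathbf{x}, \mathbf{x}') = -2\log k_\ast(\mathbf{x}, \mathbf{x}')$. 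Subtracting these yields the identity $\mathbf{D}(\mathbf{x}, \mathbf{x}') - \mathbf{D}_\ast(\mathbf{x}, \mathbf{x}') = 2\log\big(k_\ast(\mathbf{x}, \mathbf{x}') / k(\mathbf{x}, \mathbf{x}')\big)$, so the metric discrepancy is governed entirely by the multiplicative ratio of the two kernels, and the hypothesis of the lemma is exactly a constraint on that ratio.

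First I would divide the hypothesized sandwich inequality by $k_\ast(\mathbf{x}, \mathbf{x}')$, which is strictly positive since it is an exponential, to obtain $1 - \epsilon \le k(\mathbf{x}, \mathbf{x}')/k_\ast(\mathbf{x}, \mathbf{x}') \le 1 + \epsilon$. Taking reciprocals, which is valid and reverses the inequalities because all three quantities are positive and $\epsilon \in (0,1)$ keeps $1-\epsilon$ strictly positive, gives $\tfrac{1}{1+\epsilon} \le k_\ast/k \le \tfrac{1}{1-\epsilon}$. Applying the monotone logarithm and multiplying by $2$ then produces the two-sided bound $-2\log(1+\epsilon) \le \mathbf{D}(\mathbf{x}, \mathbf{x}') - \mathbf{D}_\ast(\mathbf{x}, \mathbf{x}') \le 2\log\tfrac{1}{1-\epsilon}$.

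To collapse this into the stated absolute-value bound I would compare the magnitudes of the two endpoints. The upper endpoint is $2\log\tfrac{1}{1-\epsilon}$ and the lower endpoint has magnitude $2\log(1+\epsilon)$, so it suffices to verify $2\log\tfrac{1}{1-\epsilon} \ge 2\log(1+\epsilon)$. Because $\log$ is increasing, this reduces to $\tfrac{1}{1-\epsilon} \ge 1+\epsilon$, equivalently $1 \ge (1-\epsilon)(1+\epsilon) = 1 - \epsilon^2$, which holds for every admissible $\epsilon$. Hence the positive endpoint dominates in magnitude, and $|\mathbf{D}(\mathbf{x}, \mathbf{x}') - \mathbf{D}_\ast(\mathbf{x}, \mathbf{x}')| \le 2\log\tfrac{1}{1-\epsilon}$, as claimed.

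The computation is elementary, so there is no serious obstacle; the only point requiring care is the final endpoint comparison. It is tempting to expect a bound symmetric in $\epsilon$, but the asymmetry of the reciprocal map about $1$ makes the $\tfrac{1}{1-\epsilon}$ side strictly larger, and recognizing that this side controls $|\mathbf{D} - \mathbf{D}_\ast|$ is the one substantive step. I would also flag that this lemma serves as the deterministic bridge of the overall argument: it converts a \emph{multiplicative} kernel-approximation guarantee into an \emph{additive} metric-approximation guarantee, which is precisely the form that the later probabilistic result (Theorem~\ref{theo:1}) will be able to supply with high probability.
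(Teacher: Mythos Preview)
Your proof is correct and follows essentially the same route as the paper: invert the exponential relation $k=\exp(-\tfrac12\mathbf{D})$ to rewrite the metric gap as $2\log(k_\ast/k)$, apply the hypothesis to bound the ratio, and finish with the observation that $1/(1-\epsilon)\ge 1+\epsilon$ for $\epsilon\in(0,1)$ so the $1/(1-\epsilon)$ endpoint dominates. The paper's proof is the same computation, just slightly more terse.
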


The result of Lemma~\ref{lem:1}, which is formally proved in Appendix~\ref{app:a}, suggests a direct strategy to guarantee the metric approximation is close to zero simultaneously for all pair $(\mathbf{x},\mathbf{x}')$, as formalized in Lemma~\ref{lem:2}.

\begin{lemma}
\label{lem:2}
Suppose $\displaystyle \underset{(\mathbf{x},\mathbf{x}')}{\mathrm{sup}}\ \left|\frac{k(\mathbf{x},\mathbf{x}') - k_\ast(\mathbf{x}, \mathbf{x}')}{k_\ast(\mathbf{x},\mathbf{x}')}\right| \ \leq\ \epsilon$ for $\epsilon \in (0, 1)$ then
\begin{eqnarray}
\hspace{-8mm}\forall (\mathbf{x}, \mathbf{x}'):\Big|\mathbf{D}(\mathbf{x}, \mathbf{x}') - \mathbf{D}_\ast(\mathbf{x},\mathbf{x}')\Big| \hspace{-2mm}&\leq&\hspace{-2mm} 2\log\left(\frac{1}{1 - \epsilon}\right) \label{eq:13}
\end{eqnarray}
\end{lemma}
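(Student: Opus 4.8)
The plan is to obtain Lemma~\ref{lem:2} as a straightforward \emph{uniform} consequence of Lemma~\ref{lem:1}, which already handles an individual pair. The key observation is that the supremum hypothesis controls the relative kernel error simultaneously at every pair, while the bound that Lemma~\ref{lem:1} produces, $2\log(1/(1-\epsilon))$, is itself pair-independent. Consequently, once the pointwise estimate is secured for an arbitrary fixed pair, promoting it to a statement holding for all pairs requires no further supremum on the conclusion side.

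Concretely, I would first fix an arbitrary pair $(\mathbf{x},\mathbf{x}')$ and unpack the hypothesis. Since $\sup_{(\mathbf{x},\mathbf{x}')}\left|(k(\mathbf{x},\mathbf{x}') - k_\ast(\mathbf{x},\mathbf{x}'))/k_\ast(\mathbf{x},\mathbf{x}')\right| \leq \epsilon$, the same inequality holds at this fixed pair, i.e. $\left|k(\mathbf{x},\mathbf{x}')/k_\ast(\mathbf{x},\mathbf{x}') - 1\right| \leq \epsilon$. Here I would record the minor but necessary point that the quotient is well-defined and that multiplying through by $k_\ast$ preserves the inequality directions, because $k_\ast(\mathbf{x},\mathbf{x}') = \exp(-0.5\,\mathbf{D}_\ast(\mathbf{x},\mathbf{x}')) > 0$ strictly by construction (cf. Eq.~\eqref{eq:6}), so no division by zero and no sign flip ever occurs. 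Rearranging the two-sided relative bound $-\epsilon \leq k/k_\ast - 1 \leq \epsilon$ into multiplicative form then gives $(1-\epsilon)\,k_\ast(\mathbf{x},\mathbf{x}') \leq k(\mathbf{x},\mathbf{x}') \leq (1+\epsilon)\,k_\ast(\mathbf{x},\mathbf{x}')$, which is exactly the hypothesis of Lemma~\ref{lem:1} at $(\mathbf{x},\mathbf{x}')$. Invoking Lemma~\ref{lem:1} therefore yields $\left|\mathbf{D}(\mathbf{x},\mathbf{x}') - \mathbf{D}_\ast(\mathbf{x},\mathbf{x}')\right| \leq 2\log\!\left(1/(1-\epsilon)\right)$ for this pair.

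Since the pair was arbitrary and the right-hand bound depends only on $\epsilon$ and not on $(\mathbf{x},\mathbf{x}')$, the same inequality holds for every pair, which is precisely the conclusion~\eqref{eq:13}. I do not anticipate a genuine obstacle: the entire argument is a ``promotion'' of a pointwise estimate to a uniform one, made immediate by the uniformity of Lemma~\ref{lem:1}'s bound. The only step warranting explicit care is the strict positivity of $k_\ast$, which legitimizes converting the ratio form of the hypothesis into the multiplicative form Lemma~\ref{lem:1} expects; everything else follows at once.
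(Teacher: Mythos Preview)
Your proposal is correct and follows essentially the same route as the paper: unpack the supremum hypothesis to obtain the two-sided multiplicative bound $(1-\epsilon)k_\ast \leq k \leq (1+\epsilon)k_\ast$ at every pair, then invoke Lemma~\ref{lem:1} pointwise. Your additional remark on the strict positivity of $k_\ast$ is a welcome clarification the paper leaves implicit, but otherwise the arguments are identical.
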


Enforcing the premise of Lemma~\ref{lem:2} -- see its proof in Appendix~\ref{app:b} -- however, is not always possible as it depends on the randomness in which we obtain our observations of the surrogate training feedback $r(\mathbf{x})$. This raises the following questions:\vspace{2mm}

{\bf How likely this happens and how many observations are sufficient to guarantee that such premise would happen with high chance?} \vspace{2mm}

These are addressed in Theorems~\ref{theo:1} and ~\ref{theo:2} below.\vspace{2mm}

\begin{theorem} 
\label{theo:1}
Let $\displaystyle g(\tau) \triangleq \log(\tau) + (1/\tau) - 1$ and $\displaystyle c_{\epsilon} \triangleq \epsilon\lambda_{\max}(\mathbf{K}_\ast)/d$, we have
\begin{eqnarray}
\hspace{-15mm}&\ &\mathrm{Pr}\Bigg(\underset{(\mathbf{x},\mathbf{x}')}{\mathrm{sup}}\left|\frac{k(\mathbf{x},\mathbf{x}') - k_\ast(\mathbf{x},\mathbf{x}')}{k_\ast(\mathbf{x},\mathbf{x}')}\right| \geq\epsilon\Bigg)\nonumber\\
\hspace{-15mm}&\leq& \mathrm{exp}\left(-\frac{1}{2}n \cdot g\left(\frac{\sigma^4}{\alpha}\Big(1 - c_{\epsilon}\Big)\lambda_{\max}\Big(\mathbf{K}_\ast\Big)\right)\right) \label{eq:14}
\end{eqnarray}
where the constants $d$ and $\alpha$ are defined in {\bf A1} and {\bf A2} above respectively. The detailed proof of Theorem~\ref{theo:1} is deferred to Appendix~\ref{app:c}. In addition, we note that $g(\tau)$ is non-negative so the RHS of Eq.~\eqref{eq:14} is no greater than $1$, ensuring that the bound is not vacuous.
\end{theorem}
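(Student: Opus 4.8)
The plan is to peel the three assumptions off one at a time, reducing the multiplicative-error event to an upper-tail event for a single (scaled) chi-squared variable, and then to close with a Chernoff bound whose rate function is exactly $g$. First I would trade the multiplicative discrepancy for an additive one through $\sup_{(\mathbf{x},\mathbf{x}')} |(k - k_\ast)/k_\ast| \le (\sup_{(\mathbf{x},\mathbf{x}')} |k - k_\ast|)/(\inf_{(\mathbf{x},\mathbf{x}')} k_\ast)$, so that the event in \eqref{eq:14} is contained in $\{\sup |k - k_\ast| \ge \epsilon \cdot \inf k_\ast\}$. Assumption \textbf{A1} gives $\inf k_\ast \ge \lambda_{\max}(\mathbf{K}_\ast)/d$, hence $\epsilon \cdot \inf k_\ast \ge \epsilon \lambda_{\max}(\mathbf{K}_\ast)/d = c_\epsilon$, and the event shrinks further to $\{\sup |k - k_\ast| \ge c_\epsilon\}$. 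As a by-product, \textbf{A1} together with $k_\ast \le 1$ forces $c_\epsilon \le \epsilon < 1$, which I will use below to keep $1 - c_\epsilon > 0$.

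Next I would invoke \textbf{A2} to convert the kernel event into a quadratic-form event: since $\sup |k - k_\ast| \le 1 - \alpha Q^{-1}$ with $Q \triangleq (\mathbf{r} - \widehat{\mathbf{r}})^\top \mathbf{A} (\mathbf{r} - \widehat{\mathbf{r}})$, we have $\{\sup |k - k_\ast| \ge c_\epsilon\} \subseteq \{1 - \alpha Q^{-1} \ge c_\epsilon\} = \{Q \ge \alpha/(1 - c_\epsilon)\}$. The decisive step is then to evaluate $Q$ in closed form. Using the GP predictive mean at the training inputs, $\widehat{\mathbf{r}} = \mathbf{K}(\mathbf{K} + \sigma^2 \mathbf{I})^{-1}\mathbf{r}$ (the in-sample specialization of \eqref{eq:2}), one gets $\mathbf{r} - \widehat{\mathbf{r}} = \sigma^2 (\mathbf{K} + \sigma^2 \mathbf{I})^{-1}\mathbf{r}$; substituting $\mathbf{A} = (1/n)(\mathbf{K} + \sigma^2 \mathbf{I})^2$ and using that all these matrices commute, the factors telescope and collapse to $Q = (\sigma^4/n)\, \mathbf{r}^\top \mathbf{r}$. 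The event therefore becomes simply $\{\mathbf{r}^\top \mathbf{r} \ge n\alpha / (\sigma^4 (1 - c_\epsilon))\}$.

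To bound this last probability I would bring in \textbf{A3}, namely $\mathbf{r} \sim \mathbb{N}(\mathbf{0}, \mathbf{K}_\ast)$, under which $\mathbf{r}^\top \mathbf{K}_\ast^{-1}\mathbf{r} \sim \chi^2_n$ exactly, while $\mathbf{r}^\top \mathbf{r} \le \lambda_{\max}(\mathbf{K}_\ast)\, \mathbf{r}^\top \mathbf{K}_\ast^{-1}\mathbf{r}$ because $\lambda_{\max}(\mathbf{K}_\ast)\mathbf{K}_\ast^{-1} \succeq \mathbf{I}$. Dominating the quadratic form by a scaled chi-squared yields $\mathrm{Pr}(\mathbf{r}^\top \mathbf{r} \ge T) \le \mathrm{Pr}(\chi^2_n \ge T/\lambda_{\max}(\mathbf{K}_\ast))$ with $T = n\alpha/(\sigma^4(1-c_\epsilon))$. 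Setting $\rho \triangleq T/(n\lambda_{\max}(\mathbf{K}_\ast)) = \alpha/(\sigma^4(1-c_\epsilon)\lambda_{\max}(\mathbf{K}_\ast))$, the standard Chernoff/MGF tail bound $\mathrm{Pr}(\chi^2_n \ge n\rho) \le \exp(-\tfrac{n}{2}(\rho - 1 - \log \rho))$, valid for $\rho > 1$, completes the chain once I observe the reciprocal identity $\rho - 1 - \log \rho = g(1/\rho)$ and that $1/\rho = (\sigma^4/\alpha)(1 - c_\epsilon)\lambda_{\max}(\mathbf{K}_\ast)$ is precisely the argument of $g$ in \eqref{eq:14}.

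I expect the main obstacle to be the closed-form evaluation of $Q$: the whole reduction to a chi-squared tail rests on recognizing that $\mathbf{A}$ was engineered so that the $(\mathbf{K} + \sigma^2\mathbf{I})^{-1}$ factors arising from $\mathbf{r} - \widehat{\mathbf{r}}$ cancel against $(\mathbf{K}+\sigma^2\mathbf{I})^2$, collapsing a data-dependent quadratic form into the clean $(\sigma^4/n)\|\mathbf{r}\|^2$; without this cancellation the target quantity is not a scaled chi-squared and the rate function $g$ does not emerge. A secondary point needing care is the validity regime of the Chernoff step, $\rho > 1$ (equivalently, the argument of $g$ lying below $1$): this is the range in which the upper-tail inequality is both correct and useful, and there the non-negativity of $g$ noted after the statement guarantees the right-hand side never exceeds $1$. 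The bound then decays like $\exp(-\tfrac{n}{2} g(\cdot))$ as $n$ grows, which is the promised ``sufficiently large $n$'' guarantee.
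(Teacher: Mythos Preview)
Your proposal is correct and follows the paper's proof through all the structural reductions: \textbf{A1} to pass from the multiplicative to an additive kernel error, \textbf{A2} to convert to the quadratic-form event $\{(1-c_\epsilon)Q \ge \alpha\}$, and the telescoping identity $\mathbf{r} - \widehat{\mathbf{r}} = \sigma^2(\mathbf{K}+\sigma^2\mathbf{I})^{-1}\mathbf{r}$ (the paper's Lemma~\ref{lem:4}) to collapse $Q$ to $(\sigma^4/n)\,\mathbf{r}^\top\mathbf{r}$. The only deviation is in the final tail step: the paper computes the Gaussian MGF $\mathbb{E}[\exp(\lambda\,\mathbf{r}^\top\mathbf{r})] = |\mathbf{I} - 2\lambda\mathbf{K}_\ast|^{-1/2}$ directly via its Lemma~\ref{lem:3} and then optimizes over $\lambda$, whereas you first dominate $\mathbf{r}^\top\mathbf{r} \le \lambda_{\max}(\mathbf{K}_\ast)\cdot\chi^2_n$ and invoke the standard chi-squared Chernoff bound together with the reciprocal identity $\rho - 1 - \log\rho = g(1/\rho)$. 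Your variant is arguably cleaner: the paper's stated optimizer and the resulting closed form effectively require replacing $\mathbf{K}_\ast$ by $\lambda_{\max}(\mathbf{K}_\ast)\mathbf{I}$ inside the log-determinant, which is exactly the eigenvalue dominance you make explicit up front, and your formulation also makes the validity regime $\rho>1$ (equivalently, the argument of $g$ below $1$) transparent.
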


Theorem~\ref{theo:1} therefore establishes that the premise of Lemma~\ref{lem:2} will happen with a high probability with a sufficiently large value of $n$, thus asserting its implication of Lemma~\ref{lem:2} with high chance. This guarantees the discrepancy between the learned and true metric at any pair $(\mathbf{x}, \mathbf{x}')$ can be made arbitrarily small with a large value of $n$. This is formalized below.\vspace{2mm}

\begin{theorem}
\label{theo:2}
Let $\displaystyle g(\tau) = \log(\tau) + (1/\tau) - 1$ and $\displaystyle g_{\epsilon} = g\left(\frac{\sigma^4}{\alpha}\left(1 - \lambda_{\max}\Big(\mathbf{K}_\ast\Big)\frac{\epsilon}{d}\right)\lambda_{\max}\Big(\mathbf{K}_\ast\Big)\right)$. Then,
\begin{eqnarray}
\mathrm{Pr}\Bigg(\underset{(\mathbf{x},\mathbf{x}')}{\mathrm{sup}}\ \Big|\mathbf{D}(\mathbf{x},\mathbf{x}') - \mathbf{D}_\ast(\mathbf{x},\mathbf{x}')\Big| &\leq& 2\log\left(\frac{1}{1 - \epsilon}\right)\Bigg)\nonumber\\ &\geq& 1 \ -\ \delta \label{eq:15}
\end{eqnarray}
when $n \geq\ \frac{2}{g_{\epsilon}}\log\frac{1}{\delta}$ and $\delta \in (0, 1)$ is an arbitrarily small confidence parameter. Theorem~\ref{theo:2} can be derived from Theorem~\ref{theo:1} by setting the RHS of Eq.~\eqref{eq:14} to an arbitrarily small value $\delta$, solving for $n$ and following up with direct application of Lemma~\ref{lem:2}. Its proof is deferred to Appendix~\ref{app:d}.
\end{theorem}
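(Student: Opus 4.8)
The strategy is to read Theorem~\ref{theo:2} off Theorem~\ref{theo:1} and Lemma~\ref{lem:2} through a short chain of deterministic manipulations, with no new probabilistic machinery required. First I would observe that the exponent on the right-hand side of Eq.~\eqref{eq:14} is exactly $-\tfrac12 n\, g_\epsilon$. Indeed, substituting $c_\epsilon = \epsilon\lambda_{\max}(\mathbf{K}_\ast)/d$ from Theorem~\ref{theo:1} into its argument gives $\tfrac{\sigma^4}{\alpha}(1-c_\epsilon)\lambda_{\max}(\mathbf{K}_\ast) = \tfrac{\sigma^4}{\alpha}\bigl(1-\lambda_{\max}(\mathbf{K}_\ast)\tfrac{\epsilon}{d}\bigr)\lambda_{\max}(\mathbf{K}_\ast)$, which is precisely the argument of $g$ in the definition of $g_\epsilon$. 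Hence Eq.~\eqref{eq:14} reads $\mathrm{Pr}\bigl(\underset{(\mathbf{x},\mathbf{x}')}{\mathrm{sup}}\,|(k-k_\ast)/k_\ast| \ge \epsilon\bigr) \le \exp(-\tfrac12 n\, g_\epsilon)$.

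Next I would invert this tail bound to solve for the sample size. Setting $\exp(-\tfrac12 n\, g_\epsilon) \le \delta$ and taking logarithms yields $\tfrac12 n\, g_\epsilon \ge \log(1/\delta)$, i.e. $n \ge \tfrac{2}{g_\epsilon}\log\tfrac{1}{\delta}$, which is exactly the hypothesis stated in the theorem. This inversion requires $g_\epsilon > 0$ so that the threshold on $n$ is finite and the inequality direction is preserved; since $g(\tau)=\log\tau + 1/\tau - 1 \ge 0$ with equality only at $\tau=1$ (as noted below Theorem~\ref{theo:1}), positivity holds whenever the argument of $g$ differs from $1$, and in particular that argument must be positive, which forces $1-c_\epsilon > 0$, i.e. $\epsilon < d/\lambda_{\max}(\mathbf{K}_\ast)$. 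I would flag this admissibility range for $\epsilon$ explicitly. Under $n \ge \tfrac{2}{g_\epsilon}\log\tfrac{1}{\delta}$ we then obtain $\mathrm{Pr}\bigl(\underset{(\mathbf{x},\mathbf{x}')}{\mathrm{sup}}\,|(k-k_\ast)/k_\ast| \ge \epsilon\bigr) \le \delta$, and passing to the complement gives $\mathrm{Pr}\bigl(\underset{(\mathbf{x},\mathbf{x}')}{\mathrm{sup}}\,|(k-k_\ast)/k_\ast| \le \epsilon\bigr) \ge 1-\delta$, using that $\{\,\mathrm{sup} < \epsilon\,\} \subseteq \{\,\mathrm{sup} \le \epsilon\,\}$.

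Finally I would transfer this high-probability control on the kernel ratio to the metric discrepancy via Lemma~\ref{lem:2}, treating the lemma as a purely deterministic implication. On any sample for which $\underset{(\mathbf{x},\mathbf{x}')}{\mathrm{sup}}\,|(k-k_\ast)/k_\ast| \le \epsilon$ holds, the premise of Lemma~\ref{lem:2} is met, so its conclusion $|\mathbf{D}(\mathbf{x},\mathbf{x}') - \mathbf{D}_\ast(\mathbf{x},\mathbf{x}')| \le 2\log(1/(1-\epsilon))$ holds for every $(\mathbf{x},\mathbf{x}')$, hence also after taking the supremum. This exhibits an event inclusion $\{\,\mathrm{sup}\,|(k-k_\ast)/k_\ast| \le \epsilon\,\} \subseteq \{\,\mathrm{sup}\,|\mathbf{D}-\mathbf{D}_\ast| \le 2\log(1/(1-\epsilon))\,\}$, and monotonicity of probability yields $\mathrm{Pr}\bigl(\mathrm{sup}\,|\mathbf{D}-\mathbf{D}_\ast| \le 2\log(1/(1-\epsilon))\bigr) \ge \mathrm{Pr}\bigl(\mathrm{sup}\,|(k-k_\ast)/k_\ast| \le \epsilon\bigr) \ge 1-\delta$, which is Eq.~\eqref{eq:15}.

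The computations are elementary once Theorem~\ref{theo:1} and Lemma~\ref{lem:2} are in hand, so there is no genuinely hard step; the only point demanding care is the last one, namely recognizing Lemma~\ref{lem:2} as a deterministic, pointwise implication that yields an event containment (rather than re-running any probabilistic argument), together with the complementation bookkeeping ($\ge \epsilon$ versus $\le \epsilon$, strict versus non-strict) and the admissibility condition $\epsilon < d/\lambda_{\max}(\mathbf{K}_\ast)$ that keeps $g_\epsilon$ well-defined and positive.
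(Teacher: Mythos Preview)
Your proposal is correct and follows essentially the same route as the paper's proof: set the right-hand side of Eq.~\eqref{eq:14} below $\delta$, solve for $n$ to obtain $n \ge \tfrac{2}{g_\epsilon}\log\tfrac{1}{\delta}$, pass to the complementary event, and invoke Lemma~\ref{lem:2} as a deterministic implication to transfer the bound from the kernel ratio to the metric discrepancy. Your additional remarks on the admissibility range $\epsilon < d/\lambda_{\max}(\mathbf{K}_\ast)$ and the $\ge$/$\le$ bookkeeping are helpful clarifications that the paper's proof leaves implicit.
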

Theorem~\ref{theo:2} thus concludes our analysis with the following take-home message: Under reasonable assumptions in {\bf A1, A2} and {\bf A3}, the metric induced by our self-supervised learning algorithm is vanishingly close to the true metric with arbitrarily high probability provided that we have access to a sufficiently large dataset of the surrogate training feedback $r(\mathbf{x})$. The statistical relation between this surrogate feedback and the true metric as stated in {\bf A3} is key to establish this result.

\section{Experiment}
\label{sec:exp}
We evaluate our proposed self-supervised and personalized metric learning algorithms on the MovieLens \citep{harper2015movielens} and Yelp review dataset\footnote{https://www.yelp.com/dataset/download}. A short description of the datasets is provided below.

{\bf MovieLens Dataset.} The dataset comprises $26$K+ items which were interacted with by $138$K+ users. Each interaction is a triplet of user, item and timestamp which is measured in seconds with respect to a certain point of origin in $1970$. There are about $20$M of such interactions and in addition, the dataset also provides multiple channels of meta data per item in various formats such as categorical (genre), numerical (rating) and text (plot and title). Here, representations of categorical and text features are multi-hot and pre-trained BERT \citep{devlin2018bert} embedding vectors, respectively.

{\bf Yelp Review Dataset.} The dataset comprises $8$M+ reviews given to businesses by customers. Here, we treat the businesses as items and customers as users. There are approximately $160$K+ businesses (items) and about $2$M+ customers (users). For each business, we have meta data regarding its averaged rating and business categories. The latter of which is represented as a multi-hot vector ranging over $1300$+ categories (e.g., Burgers, Mexican and Gastropubs). 

Both datasets were pre-processed using the same procedure as described in Section~\ref{app:f1}. In what follows, our experiments aim to address the following key questions:


{\bf Q1.} Does the induced metric via self-supervised learning (Section~\ref{sec:method-ssl}) improve over the vanilla metric induced from optimizing a Siamese network on noisy annotations of similar pairs of items?

{\bf Q2.} Does such SSL induced metric can be further personalized (Section~\ref{sec:method-personalize}) to fit better with a user's personal notion of similarity, which often varies substantially across different users?

\begin{figure*}[t]
\begin{tabular}{ccc}
\centering
\vspace{-3mm}
\hspace{-4mm}\includegraphics[width=0.33\linewidth]{./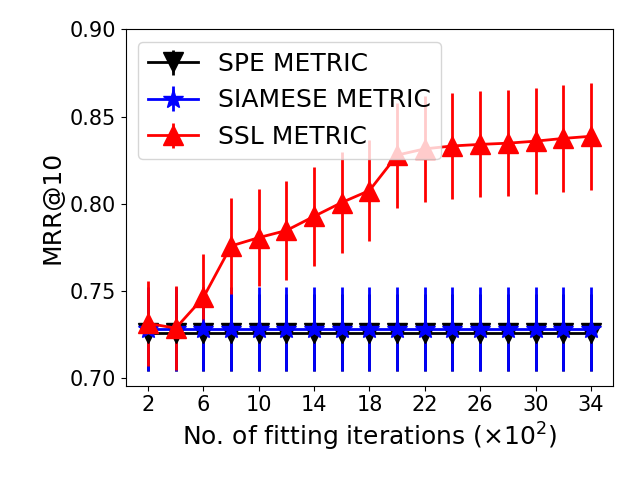} & \hspace{-3mm}\includegraphics[width=0.33\linewidth]{./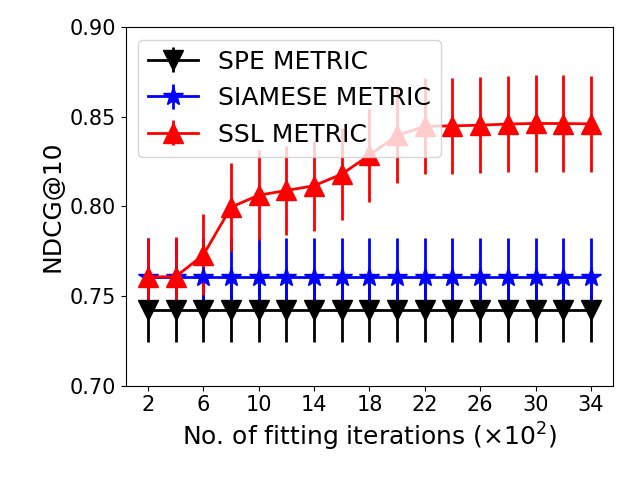}&
\hspace{-3mm}\includegraphics[width=0.33\linewidth]{./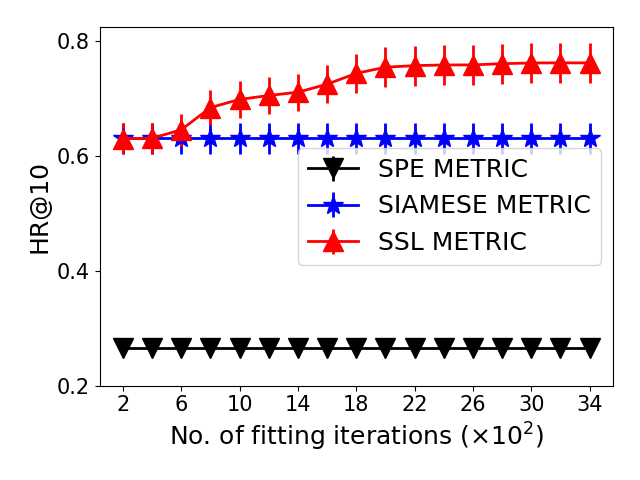}
\end{tabular}
\caption{Plots of the averaged MRR, NDCG and HR measurements (over $5$ independent SSL runs) of our SSL metric when it was used to produce a top-$10$ recommendation list for each unseen test item from the MovieLens dataset. The corresponding measurements of SPE and Siamese baselines were also included as references. These measurements however do not improve with more iterations as SPE and Siamese are non-SSL.}
\label{fig:ssl_exp_movielen_maintext}
\end{figure*}

\begin{figure*}[h!]
\begin{tabular}{ccc}
\centering
\vspace{-3mm}
\hspace{-4mm}\includegraphics[width=0.33\linewidth]{./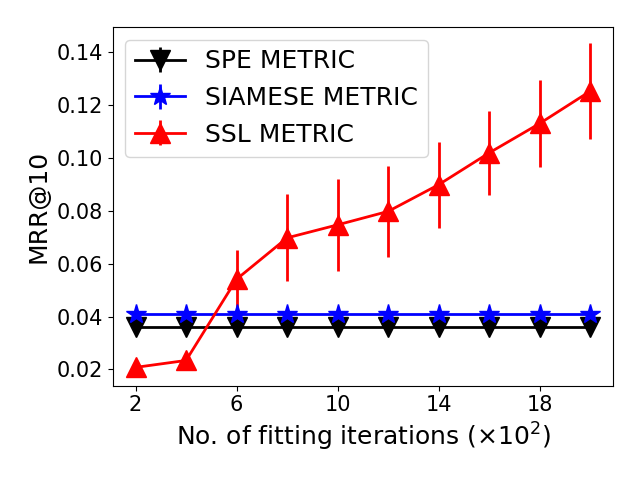} & \hspace{-3mm}\includegraphics[width=0.33\linewidth]{./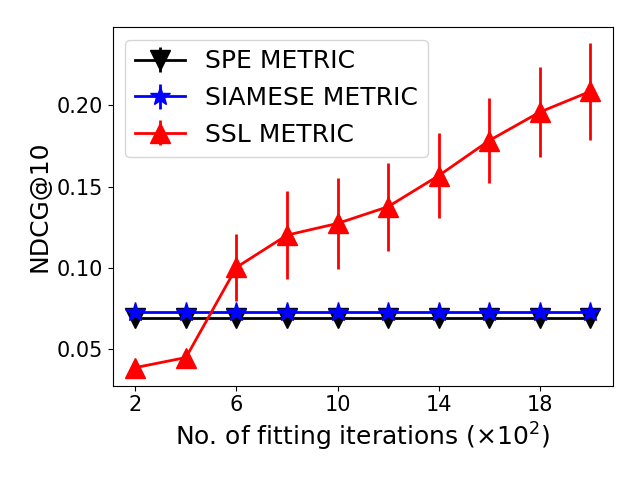}&
\hspace{-3mm}\includegraphics[width=0.33\linewidth]{./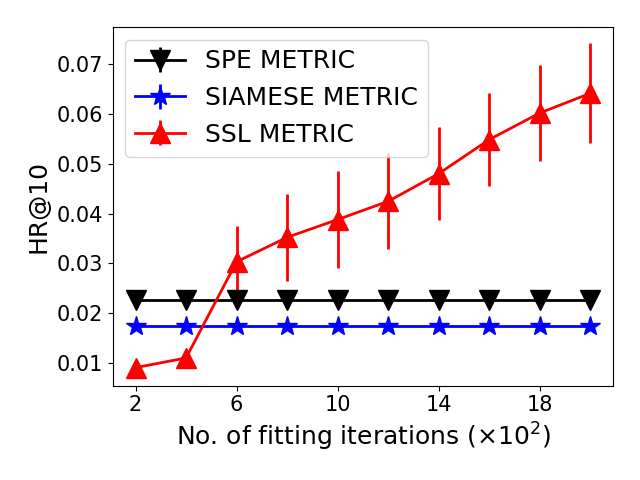}
\end{tabular}
\caption{Plots of the averaged MRR, NDCG and HR measurements (over $5$ independent SSL runs) of our SSL metric evaluated on YELP dataset. The performance of SPE and Siamese baselines were also included as references. These measurements however do not improve with more iterations as SPE and Siamese are non-SSL. Similar to our experiments with the MovieLens dataset, the MRR, NDCG and HR measurements are computed with respect to top-$10$ recommendation lists produced by the participating baselines on the same unseen test set.}
\label{fig:ssl_exp_yelp}
\end{figure*}

{\bf Evaluation.} Once learned, the item-to-item metric is used to rank the items in the list of candidates (i.e., the entire item catalogue) in the decreasing order of their similarities to a test item. For each test item, the quality of the resulting ranked list can then be assessed via standard ranking measurements such as mean reciprocal rank (MRR), hit rate (HR) and normalized discounted cumulative gain (NDCG). See below. 

{\bf Measurement Description.} To evaluate the efficiency of an item metric $\mathbf{D}$ on a specific item $\mathbf{x}$, we use it to compute the distance between $\mathbf{x}$ and every other item $\mathbf{x}'$ in the catalogue. The top $k = 10$ closest items $\{\mathbf{x}'_1, \ldots, \mathbf{x}'_{k}\}$ based on their computed distances to $\mathbf{x}$ are then extracted. Let $\mathbf{G}(\mathbf{x})$ denote the set of similar items to $\mathbf{x}$, the following quality measurements are computed to assess $\mathbf{D}$:

{\bf HR@K.} The HR@K (or \emph{hit rate} at $k$) measurement of $\mathbf{D}$ at test item $\mathbf{x}$ is 
\begin{eqnarray}
\hspace{-16mm}\mathbf{HR}_k\Big(\{\mathbf{x}'_i\}_{i=1}^k; \mathbf{x}\Big) \hspace{-2mm}&\triangleq&\hspace{-2mm} k^{-1}\sum_{i=1}^{k} \mathbb{I}\Big(\mathbf{x}'_i \in \mathbf{G}(\mathbf{x})\Big)\nonumber
\end{eqnarray}
where $\{\mathbf{x}'_1, \ldots, \mathbf{x}'_k\}$ represent the top $k$ items suggested by the recommender, in decreasing order of relevance to the test item $\mathbf{x}$. The average HR@K is computed by averaging over items in a test set.

{\bf MRR@K.} The MRR@K (or \emph{mean reciprocal rank} at $k$) measurement of $\mathbf{D}$ at $\mathbf{x}$ is 
\begin{eqnarray}
\mathbf{MRR}_k\Big(\{\mathbf{x}'_i\}_{i=1}^k; \mathbf{x}\Big) \hspace{-3mm}&\triangleq&\hspace{-3mm} \Big(\argmin_{i=1}^k \{\mathbf{x}'_i: \mathbf{x}'_i \in \mathbf{G}(\mathbf{x})\}\Big)^{\hspace{-1mm}-1}\nonumber
\end{eqnarray}
or $0$ if none of the recommended items is in $\mathbf{G}(\mathbf{x})$. The average MRR@K is then computed by averaging over items in a test set.

{\bf NDCG@K.} First, the DCG@K -- \emph{discounted cumulative gain} at $k$ -- measurement of $\mathbf{D}$ at $\mathbf{x}$ is
\begin{eqnarray}
\hspace{-1.5mm}\mathbf{NDCG}_k\Big(\{\mathbf{x}'_i\}_{i=1}^k; \mathbf{x}\Big) \hspace{-2mm}&\triangleq&\hspace{-2mm} \mathbb{I}\Big(\mathbf{x}'_1 \in \mathbf{G}(\mathbf{x})\Big) \nonumber\\ \hspace{-2mm}&+&\hspace{-2mm} \sum_{i=2}^k \mathbb{I}\Big(\mathbf{x}'_i \in \mathbf{G}(\mathbf{x})\Big) \cdot \log_2^{-1}\left[i\right] \nonumber
\end{eqnarray}
The NDCG@K -- \emph{normalized discounted cumulative gain} at $K$ -- is then obtained by dividing DCG@K to the maximum achievable DCG@K by a permutation of items in the recommendation list. The average NDCG@K is computed by averaging over the test set.

The overall assessment of the item-to-item metric is then computed by averaging the above measurements over the set of test items. Here, the test items are those whose $1$st interaction happens after a timestamp (hence, not visible to the learning algorithms) which was set so that the test set comprises about $5\%$ of the entire item catalogue. 

To set the ground-truth for item-to-item recommendation, we deem two items similar if they were interacted with by the same users within a time horizon. Otherwise, they are deemed dissimilar. Here, we also note that unlike user-item truths, item-item truths acquired in this fashion are undeniably noisy so during training, we further make use of a downstream prediction task that (presumably) correlates well with the oracle item-item truths. To our intuition, rating prediction in the case of movies does appear to correlate with item similarities, which explains for the improved performance of our SSL method as reported in Section~\ref{sec:exp-ssl} below.

All experiments were run on a computing server with a Tesla V100 GPU with 16GB RAM. For more information regarding our experiment setup and data pre-processing, please refer to Appendix~\ref{app:f}.

\begin{figure*}[h!]
\begin{tabular}{ccc}
\centering
\vspace{-3mm}
\hspace{-4mm}\includegraphics[width=0.33\linewidth]{./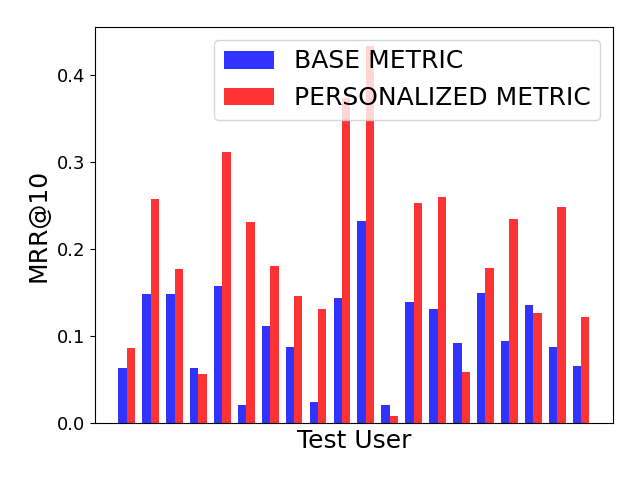} & \hspace{-3mm}\includegraphics[width=0.33\linewidth]{./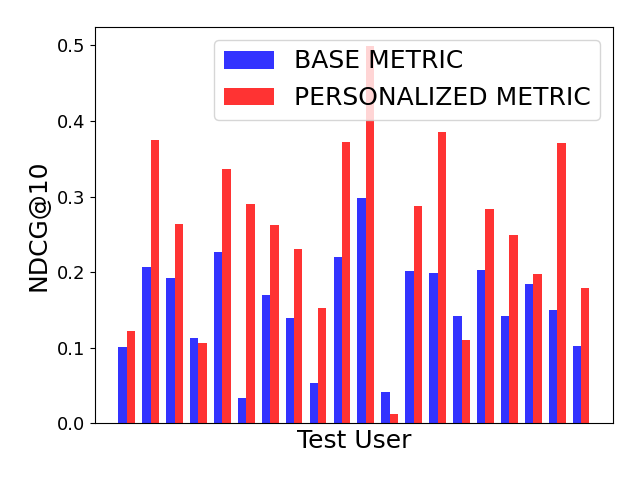}&
\hspace{-3mm}\includegraphics[width=0.33\linewidth]{./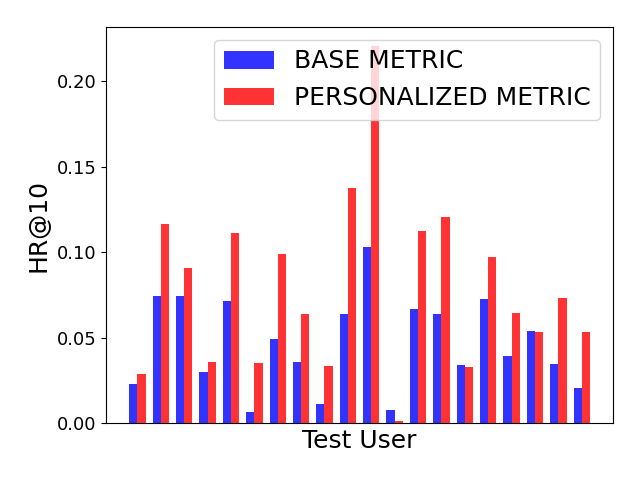}
\end{tabular}
\caption{Bar charts of the individual MRR, NDCG and HR measurements of a set of $20$ randomly sampled users with fewer than $200$ interactions with items in the catalogue. Red (blue) columns reflect the measurement (MRR, NDCG and HR) of the personalized (base) SSL metric on those users.}\vspace{-4mm}
\label{fig:personalize_exp_movielen_maintext}
\end{figure*}

\begin{figure*}[h!]
\begin{tabular}{ccc}
\centering
\vspace{-3mm}
\hspace{-4mm}\includegraphics[width=0.33\linewidth]{./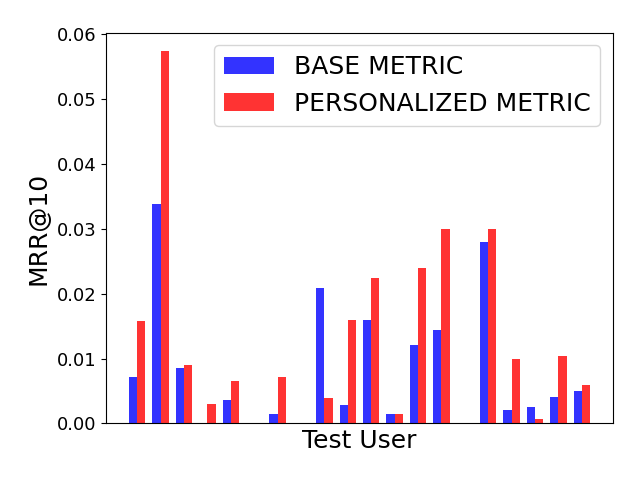} & \hspace{-3mm}\includegraphics[width=0.33\linewidth]{./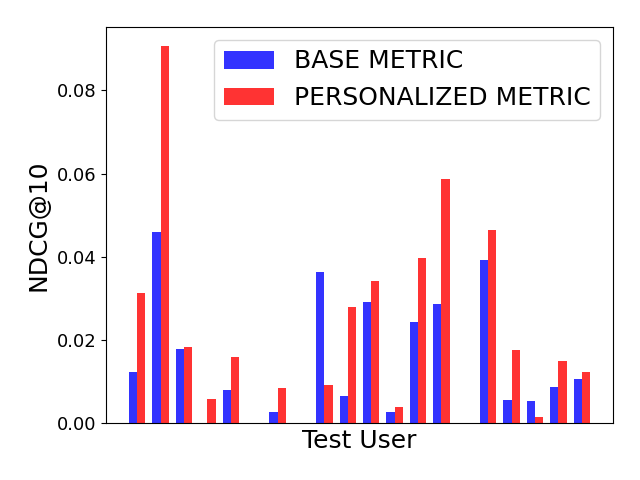}&
\hspace{-3mm}\includegraphics[width=0.33\linewidth]{./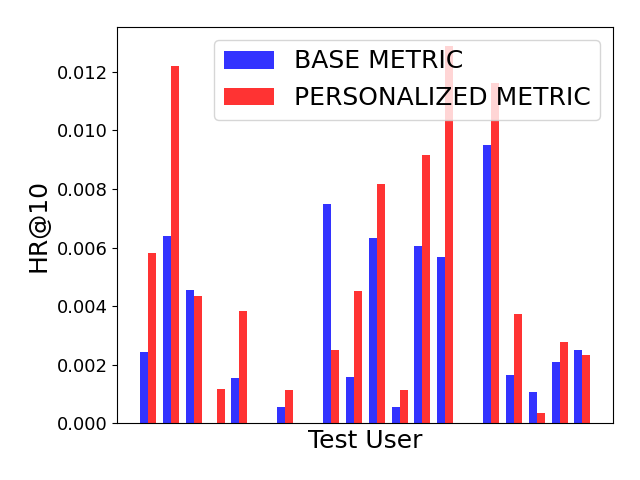}
\end{tabular}
\caption{Bar charts of the individual MRR, NDCG and HR measurements of our SSL metric on a set of randomly sampled users with fewer than $200$ interactions with items in the catalogue. Red (blue) columns reflect the quality measurement (MRR, NDCG and HR) of the personalized (base) SSL metric on those users.}
\label{fig:personalize_exp_yelp}
\end{figure*}

\subsection{Self-Supervised Metric Learning}
\label{sec:exp-ssl}
To answer {\bf Q1}, we evaluate the performance of the item-to-item metrics generated by (1) optimizing the vanilla Siamese ensemble (SIAMESE) combining the meta information channels (including \emph{ratings}) of the items; (2) optimizing the more recently proposed SPE method \citep{Li2019} using \emph{ratings} as side information and other channels as content; and (3) optimizing the GP with Siamese kernel (SSL), which is initialized with the Siamese ensemble generated in (1), using averaged \emph{ratings} of the items. The results were averaged over $5$ independent runs and reported in Figure~\ref{fig:ssl_exp_movielen_maintext} and Figure~\ref{fig:ssl_exp_yelp}. 

It is observed from both Figure~\ref{fig:ssl_exp_movielen_maintext} and Figure~\ref{fig:ssl_exp_yelp} that our SSL metric becomes increasingly better and outperforms both the semi-parametric embedding (SPE) and SIAMESE baselines significantly (across all measurements) after $3000$ iterations. This provides strong evidence to support our intuition earlier (Section~\ref{sec:intro}) that as we implicitly express the correlation between items in terms of the metric representation that parameterizes a GP model, a well-fitted GP would induce a well-shaped metric that preserves the averaged similarity geometry of items, as suggested in Section~\ref{sec:theory}.\vspace{-2mm}


\begin{figure*}[t]
\begin{tabular}{ccc}
\centering
\vspace{-3mm}
\hspace{-4mm}\includegraphics[width=0.33\linewidth]{./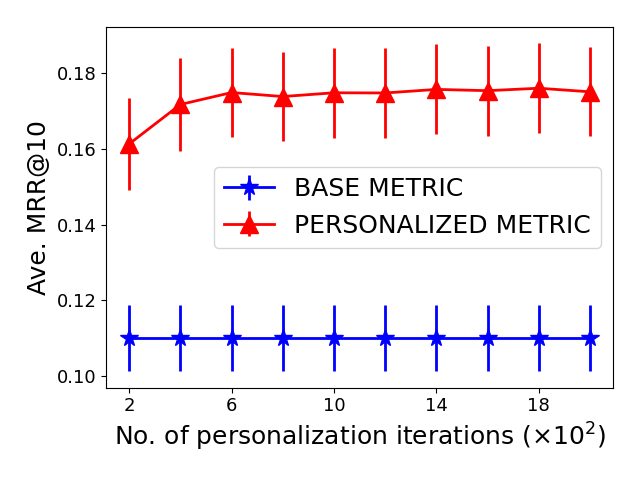} & \hspace{-3mm}\includegraphics[width=0.33\linewidth]{./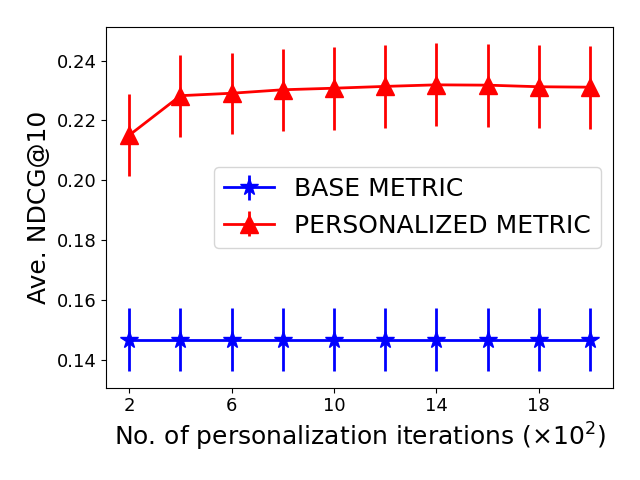}&
\hspace{-3mm}\includegraphics[width=0.33\linewidth]{./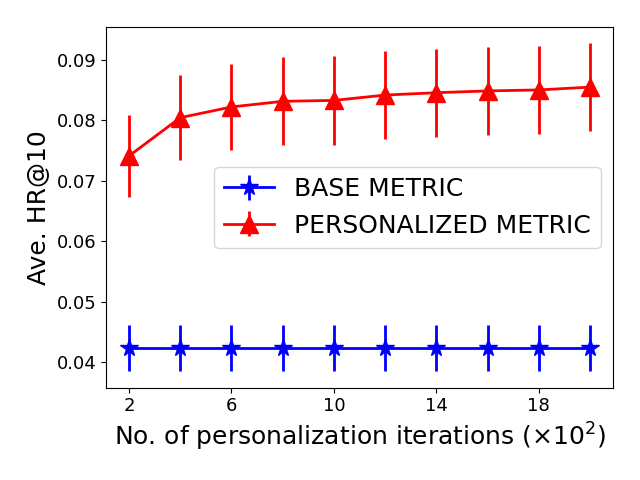}
\end{tabular}
\caption{Plots of the MRR, NDCG and HR measurements of our personalized metric over $2000$ personalization iterations on the MovieLens dataset. Our personalized metric was evaluated separately on each user and the plotted results were averaged over $20$ users. Here, its performance measurements on each individual user is computed using only the user's personal co-interaction data, rather than the co-interaction data over the entire population, which is stricter and expectedly so to evaluate personalized performance.}\vspace{-4mm}
\label{fig:personalize_average_exp_maintext}
\end{figure*}

\subsection{Personalized Metric Learning}
\label{sec:exp-personalize}

Though we obtain positive evidence in Section~\ref{sec:exp-ssl} that our SSL-induced metric improves significantly over all baselines, this is measured on the average over the entire user population rather than on each individual user. In the former context, as long as two items A and B are both interacted with by a user in the population, they are considered similar. But, in the (latter) context of a single user, A and B might not be considered similar if the user never interacts with both of them. In this case, we are interested in understanding how well our SSL metric would perform \emph{with} and \emph{without} personalization (see {\bf Q2}), especially on users with limited data. To shed light on this matter, we sample a subset of $20$ users with fewer than $200$ (but no fewer than $20$) item interactions. We then compute a personalizable metric based on the previously computed SSL-induced metric via minimizing Eq.~\eqref{eq:m2}. The personalizable metric is then tuned to fit each user's individual rating data using the SSL recipe in Section~\ref{sec:method-ssl}. 

The MRR, NDCG and HR measurements of the metric (before and after $2000$ personalization iterations on the MovieLens dataset) are reported in Figures~\ref{fig:personalize_exp_movielen_maintext} and~\ref{fig:personalize_average_exp_maintext}. Our observations are as follows: (1) on average (over $20$ users), the personalized metric consistently shows a sheer improvement over its SSL base in Figure~\ref{fig:personalize_average_exp_maintext}; (2) on an individual basis, the personalized MRR, NDCG and HR measurements improve significantly on $16/20$, $17/20$ and $17/20$ users, resulting in success rates of $80\%$, $85\%$ and $85\%$ (see Figure~\ref{fig:personalize_exp_movielen_maintext}). 

We also evaluate and report the individual personalized MRR, NDCG and HR measurements on the Yelp Review dataset for a sampled group of $20$ users in Figure~\ref{fig:personalize_exp_yelp}. In most individual cases, it can again be seen that the personalized metric also improves over the base metric (see Fig.~\ref{fig:personalize_exp_yelp}). These observations are therefore all consistent with our early observations on the MovieLens dataset. Note that, these are averaged measurements over the selected users. For each individual, the measurement is computed based \emph{only} on the corresponding user's personal co-interaction data, rather than on the co-interaction data over the entire population. Thus, this is a stricter performance measurement in comparison to that of Section~\ref{sec:exp-ssl}, and expectedly so to evaluate personalized performance.\vspace{-2mm}

\section{Conclusion}
\label{sec:conclusion}\vspace{-2mm}
This paper develops a self-supervised learning method for item-item metric distance in the context of a recommendation system where direct training examples are not readily available. The developed method embeds the metric representation as kernel parameters of a Gaussian process regression model, which is fitted on a surrogate prediction task whose training examples are more readily available. Our approach builds on the intuition that similar items should induce similar model prediction on the surrogate task, thus expressing and optimizing GP prediction in terms of item-item similarities can implicitly guide it towards capturing the right metric. We show that theoretically, our learning model can recover the right metric up to a certain error threshold with high probability. 

{\bf Societal Impact.} Though applications of our work to real data could result in ethical considerations, this is an unpredictable side-effect of our work. We use sanitized public datasets to evaluate its performance. No ethical considerations are raised.

\bibliography{aistats22}
\bibliographystyle{plainnat}

\clearpage
\appendix

\thispagestyle{empty}

\twocolumn[\makesupplementtitle]

\section{Additional Experimental Details}
\label{app:f}

\subsection{Experiment Setup}
\label{app:f1}
In our experiment setup, each dataset is pre-processed in the following forms that describe separately the item's behavior data (e.g., user-item interactions) and its content information (e.g., the meta data of an item that is user-independent). This includes:

{\bf Interaction Data.} This is a collection of tuples (\emph{user}, \emph{item}, \emph{rating}, \emph{timestamp}), each of which describes an event where a \emph{user} gives an \emph{item} a certain \emph{rating} at a certain time marked by \emph{timestamp}. Here, the \emph{timestamp} is in seconds counting from a certain origin in the past. For example, the time origin of the MovieLens dataset is at midnight (UTC) of January 1, 1970.

{\bf Meta Data.} This comprises multiple channels of different meta data. Each channel is a collection of pairs $(\mathbf{x}, \mathbf{v})$ where $\mathbf{x}$ is the item identification (e.g., a movie ID in a database) and $\mathbf{v}$ is either a scalar, multi-hot vector or a dense embedding vector representing a numerical feature, a categorical feature and an embedding of a text feature. For example, movie ratings would be represented as scalars, movie genres would be represented by multi-hot vectors, whose dimension is the total number of genres. Movie title and description are represented by $768$-dim embedding vectors generated by pre-trained BERT models \citep{devlin2018bert}.

{\bf Noisy Annotations of Similar (Dissimilar) Pairs.} To train a Siamese Net that captures the similarities between items, the standard method is to acquire (noisy) annotations of similar/dissimilar pairs of items. Here, for each user $\mathbf{u}$, we collect and sort the items $\{\mathbf{x}_1, \mathbf{x}_2, \ldots, \mathbf{x}_q\}$ that $\mathbf{u}$ has interacted with in the increasing order of time. For an randomly sampled item $\mathbf{x}_i \sim \{\mathbf{x}_1, \mathbf{x}_2, \ldots, \mathbf{x}_q\}$, we will sample (independently) $h$ items $\{\mathbf{x}^+_1, \mathbf{x}^+_2, \ldots, \mathbf{x}^+_{h}\}$ within a forward $\kappa$-step window $\{\mathbf{x}_{i + 1}, \mathbf{x}_{i + 2}, \ldots, \mathbf{x}_{i + \kappa}\}$. This forms $h$ positive (similar) pairs $\{(\mathbf{x}_i, \mathbf{x}^+_{\iota})\}_{\iota=1}^{h}$ per user. We also sample randomly (over the item catalogue) another $h$ items $\{\mathbf{x}^-_1, \ldots, \mathbf{x}^-_{h}\}$ to form another $h$ negative pairs $\{(\mathbf{x}_i, \mathbf{x}^-_{\iota})\}_{\iota=1}^{h}$. 

{\bf Training Siamese Net Baseline.} This results in a dataset $\{(\mathbf{x}_a,\mathbf{x}_b, y_{ab})\}$ where $y_{ab} = 0$ indicates $(\mathbf{x}_a,\mathbf{x}_b)$ is a pair of similar items and otherwise for $y_{ab} = 1$ (see Section~\ref{sec:siamese}). Here, the items $\mathbf{x}_a$ and $\mathbf{x}_b$ are represented as lists of meta vectors (one per channel), $\mathbf{x}_a = [\mathbf{v}_a^1, \ldots, \mathbf{v}_a^p]$ and $\mathbf{x}_b = [\mathbf{v}_b^1, \ldots, \mathbf{v}_b^p]$, respectively. This dataset will then be used to train a Siamese ensemble (see Section~\ref{sec:method-personalize}) which learns a separate individual Siamese distance $\mathbf{D}_i(\mathbf{v}^i_a, \mathbf{v}^i_b)$ per meta channel, and combines them via Eq.~\eqref{eq:7}. $\mathbf{D}_i(\mathbf{v}^i_a, \mathbf{v}^i_b)$ is computed via Eq.~\eqref{eq:6} which is expressed in the abstract form of a feature embedding tower $\mathbf{F}_{\mathbf{v}}$ and a scale matrix $\mathbf{\Lambda}$. Both of which are detailed next in Appendix~\ref{app:f2}.

{\bf Item Test Set.} The above annotation is restricted to items that appears before a certain time $\mathrm{T}$. An item is considered to appear before $\mathrm{T}$ if the earliest time it was interacted with by a user is before $\mathrm{T}$. In our experiment, $\mathrm{T}$ is selected such that the no. of test item is about $5\%$ of the item catalogue.

\subsection{Model Parameterization}
\label{app:f2}
This section elaborates further on the Siamese ensemble architecture that we mention in the main text. First, this refers to Eq.~\eqref{eq:7} which breaks down the overall item metric into individual metric across multiple meta channels. Second, each individual metric is characterized in Eq.~\eqref{eq:5} which concerns a feature embedding tower $\mathbf{F}(\mathbf{x}; \boldsymbol{\gamma})$ that maps the meta data vector of a single channel\footnote{Here, we abuse the notation $\mathbf{x}$ to represent a single-channel meta vector whereas previously, $\mathbf{x}$ was also used to denote a list of such single-channel meta vectors. Nonetheless, we believe this notation abuse does not impact the readability of the section since the context is clear.} into a metric space equipped with a Mahalonobis distance parameterized by a diagonal scale matrix $\boldsymbol{\Lambda}$. Here, we set $\boldsymbol{\Lambda} = \mathbf{I}$ seeing that its values on the diagonal can be absorbed into the parameterization $\boldsymbol{\gamma}$ of $\mathbf{F}(\mathbf{x}; \boldsymbol{\gamma})$, which is parameterized separately for each channel, as detailed below.

For each channel $m$ of meta data, the corresponding feature embedding tower $\mathbf{F}_m(\mathbf{x}; \boldsymbol{\gamma})$ is a feed-forward net comprising $3$ dense layers with $2 \times h$, $h$ and $h$ hidden units where $h = 50$ in our experiments. The layers are activated by a $\mathrm{ReLu}$ (RL), $\mathrm{sigmoid}$ and $\mathrm{tanh}$ functions in that order. That is, $\mathbf{F}_m(\mathrm{meta}_m(\mathbf{x}); \boldsymbol{\gamma}_m) \triangleq$
\begin{eqnarray}
\mathrm{tanh}\Bigg(\mathbf{W}^m_t\sigma\Big(\mathbf{W}^m_s\mathrm{RL}\Big(\mathbf{W}^m_o\mathrm{vec}(\mathbf{x}) \hspace{-1mm}+\hspace{-1mm}\mathbf{b}^m_o\Big) \hspace{-1mm}+\hspace{-1mm}\mathbf{b}^m_s\Big) + \mathbf{b}^m_t\Bigg) \nonumber
\end{eqnarray}
where $\mathbf{W}^m_o \in \mathbb{R}^{2h \times d}$, $\mathbf{W}^m_s \in \mathbb{R}^{h \times 2h}$ and $\mathbf{W}^m_t \in \mathbb{R}^{h \times h}$ are learnable affine transformation weights. Likewise, $\mathbf{b}^m_o \in \mathbb{R}^{2h}$, $\mathbf{b}^m_s \in \mathbf{R}^h$ and $\mathbf{b}^m_t \in \mathbf{R}^h$ are learnable bias vectors. Here, the $\mathrm{vec}$ or $\mathrm{Flatten}$ operator reshapes the input tensor $\mathbf{x}$ into a column vector of $d$ dimensions. The other $\mathrm{tanh}$, $\mathrm{sigmoid}$ and $\mathrm{ReLu}$ are applied point-wise to components of their input vectors or matrices. Thus, in short, we have $\boldsymbol{\gamma}_m = \{(\mathbf{W}^m_t, \mathbf{b}^m_t), (\mathbf{W}^m_s, \mathbf{b}^m_s), (\mathbf{W}^m_o, \mathbf{b}^m_o)\}$ as learnable parameters for $\mathbf{F}_m$.

Furthermore, in addition to the above parametric embedding of each meta channel, we also have a non-parametric embedding tower that maps from each item ID (i.e., an integer scalar) to a unique continuous $p$-dimensional vector. This ID embedding tower is non-parametric since its number of parameters is proportional to the number of items in the catalogue,
\begin{eqnarray}
\mathbf{F}_{\mathrm{ID}}\left(\mathrm{ID}\Big(\mathbf{x}\Big); \boldsymbol{\gamma}_{\mathrm{ID}}\right) \hspace{-3mm}&\triangleq&\hspace{-3mm} \mathrm{Flatten}\Bigg(\mathrm{Embedding}\Big(\mathrm{ID}(\mathbf{x}); \boldsymbol{\gamma}\Big)\Bigg) \nonumber
\end{eqnarray}
Here, $\boldsymbol{\gamma}_{\mathrm{ID}}$ comprises $p \times n$ learnable scalars where $n$ is the total number of items. The $\mathrm{Flatten}$ operator again reshapes the output into a $p$-dimension column vector. In our experiment, $p=30$. For more detail, our experimental code is also included in the supplement.

\section{Proof of Lemma~\ref{lem:0}}
\label{app:g}

{\bf Lemma 1.} Assuming $\kappa_u^i(\mathbf{w})$ is twice-differentiable at $\mathbf{w} = \mathbf{0}$, the approximation of $\kappa_u^i(\mathbf{w})$ with its $2^{\mathrm{nd}}$-order Taylor expansion around $\mathbf{w} = \mathbf{0}$ induces the following,
\begin{eqnarray}
\hspace{-4mm}\nabla_{\mathbf{w}}\mathbf{L}(\mathbf{w}_+) \hspace{-3mm}&=&\hspace{-3mm}  \frac{1}{q}\sum_{u=1}^{q}\Bigg(\Big[\mathbf{D}_{\mathbf{w}}\kappa_u(\mathbf{w}_+)\Big]\nabla_{\mathbf{w}}\ell_u(\kappa_u(\mathbf{w}_+))\Bigg) \nonumber\\\label{eq:g1}
\end{eqnarray}
with $\mathbf{D}_{\mathbf{w}}\kappa_u(\mathbf{w}_+) \triangleq \Big[\nabla^\top_{\mathbf{w}}\kappa_u^1(\mathbf{w}_+); \ldots; \nabla^\top_{\mathbf{w}}\kappa_u^{p + 1}(\mathbf{w}_+)\Big]$ whose rows are approximated via
\begin{eqnarray}
\hspace{-10mm}\nabla_{\mathbf{w}}\kappa_u^i\Big(\mathbf{w}_+\Big) \hspace{-2mm}&\simeq&\hspace{-2mm} \nabla_{\mathbf{w}}\kappa_u^i\Big(\mathbf{0}\Big) + \left[\nabla^2_{\mathbf{w}}\kappa_u^i\Big(\mathbf{0}\Big)\right]\ \mathbf{w}_+ \ .\label{eq:g2}
\end{eqnarray}

\begin{proof}
First, it is straight-forward to see that Eq.~\eqref{eq:g1} above can be derived by taking derivative on both sides of Eq.~\eqref{eq:m2} and the RHS of Eq.~\eqref{eq:g1} concerning the Jacobian $\mathbf{D}_{\mathbf{w}}\kappa_u(\mathbf{w}_+))$ is simply the result of the chain rule of differentiation. 

Thus, what remains to be proved is how one arrive at Eq.~\eqref{eq:g2} assuming the $2^{\mathrm{nd}}$-order Taylor expansion of $\kappa_u^i(\mathbf{w})$ exists around $\mathbf{w} = \mathbf{0}$ and can be used as a reasonable approximation. To see this, let us express the $2^{\mathrm{nd}}$-order Taylor of $\kappa_u^i(\mathbf{w})$ around $\mathbf{0}$ explicitly below:
\begin{eqnarray}
\kappa_u^i\Big(\mathbf{w}\Big) &\simeq& \kappa_u^i\Big(\mathbf{0}\Big) \ +\  \mathbf{w}^\top\left[\nabla_{\mathbf{w}}\kappa_u^i\Big(\mathbf{0}\Big)\right] \nonumber\\
&+& \frac{1}{2}\mathbf{w}^\top\left[\nabla^2_{\mathbf{w}}\kappa_u^i\Big(\mathbf{0}\Big)\right]\mathbf{w} \ .
\label{eq:g3}
\end{eqnarray}
Taking derivative with respect to $\mathbf{w}$ on both sides of Eq.~\eqref{eq:g3} and evaluating both at $\mathbf{w} = \mathbf{w}_+$ yield,
\begin{eqnarray}
\hspace{-13mm}\nabla_{\mathbf{w}}\kappa_u^i\Big(\mathbf{w}\Big) \hspace{-2mm}&\simeq&\hspace{-2mm} \nabla_{\mathbf{w}}\kappa_u^i\Big(\mathbf{0}\Big) + \left[\nabla^2_{\mathbf{w}}\kappa_u^i\Big(\mathbf{0}\Big)\right]\mathbf{w}_+ \ ,
\label{eq:g3}
\end{eqnarray}
which is the desired approximation above.
\end{proof}

\section{Proof of Lemma~\ref{lem:1}}
\label{app:a}

{\bf Lemma 2.}
Suppose $\displaystyle (1 \ -\ \epsilon) \cdot k_\ast(\mathbf{x},\mathbf{x}') \ \ \leq\ \  k(\mathbf{x},\mathbf{x}')\ \ \leq\ \ (1 \ +\  \epsilon) \cdot k_\ast(\mathbf{x},\mathbf{x}')$ for $\epsilon \in (0, 1)$ then
\begin{eqnarray}
\hspace{-11mm}\Big|\mathbf{D}(\mathbf{x}, \mathbf{x}') - \mathbf{D}_\ast(\mathbf{x},\mathbf{x}')\Big| \ \ &\leq&\ \  2\log\left(\frac{1}{1 - \epsilon}\right) \ . \label{eq:a1}
\end{eqnarray}

\begin{proof}
From the above premises, we have
\begin{eqnarray}
\hspace{-20mm}1 + \epsilon &\geq& k(\mathbf{x},\mathbf{x}')/k_\ast(\mathbf{x},\mathbf{x}') \nonumber\\
&=& \mathrm{exp}\left(\frac{1}{2}\cdot \Big(\mathbf{D}_\ast(\mathbf{x}, \mathbf{x}') - \mathbf{D}(\mathbf{x}, \mathbf{x}')\Big)\right) \label{eq:a2}
\end{eqnarray}
which immediately implies $\mathbf{D}_\ast(\mathbf{x}, \mathbf{x}') - \mathbf{D}(\mathbf{x}, \mathbf{x}') \ \leq\ 2\log(1 + \epsilon)$. Likewise, repeating the same exercise for $k(\mathbf{x},\mathbf{x}')/k_\ast(\mathbf{x},\mathbf{x}') \ \geq\ 1 - \epsilon$ implies $\mathbf{D}_\ast(\mathbf{x}, \mathbf{x}') - \mathbf{D}(\mathbf{x}, \mathbf{x}') \ \geq\ -2\log(1/(1 - \epsilon))$. Thus, combining the above, it follows that 
\begin{eqnarray}
\hspace{-18mm}\Bigg|\mathbf{D}(\mathbf{x}, \mathbf{x}') - \mathbf{D}_\ast(\mathbf{x},\mathbf{x}')\Bigg| &\leq& 2\log\left(\frac{1}{1 - \epsilon}\right) \label{eq:a3}
\end{eqnarray}
which holds because for $\epsilon \in (0, 1)$, we have $1/(1 \ -\ \epsilon) \ \geq\ 1 \ +\ \epsilon$.
\end{proof}

\section{Proof of Lemma~\ref{lem:2}}
\label{app:b}
{\bf Lemma 3}
Suppose $\displaystyle \underset{(\mathbf{x},\mathbf{x}')}{\mathrm{sup}}\ \left|\frac{k(\mathbf{x},\mathbf{x}') - k_\ast(\mathbf{x}, \mathbf{x}')}{k_\ast(\mathbf{x},\mathbf{x}')}\right| \ \leq\ \epsilon$ for $\epsilon \in (0, 1)$ then
\begin{eqnarray}
\hspace{-7mm}\forall (\mathbf{x}, \mathbf{x}'):\Big|\mathbf{D}(\mathbf{x}, \mathbf{x}') - \mathbf{D}_\ast(\mathbf{x},\mathbf{x}')\Big| \hspace{-2mm}&\leq&\hspace{-2mm} 2\log\left(\frac{1}{1 - \epsilon}\right) \label{eq:b1}
\end{eqnarray}

\begin{proof}
$\displaystyle \underset{(\mathbf{x},\mathbf{x}')}{\mathrm{sup}}\ \left|\frac{k(\mathbf{x},\mathbf{x}') - k_\ast(\mathbf{x}, \mathbf{x}')}{k_\ast(\mathbf{x},\mathbf{x}')}\right| \ \leq\ \epsilon$ implies
\begin{eqnarray}
\hspace{-8mm}\forall (\mathbf{x},\mathbf{x}'): k_\ast\left(\mathbf{x},\mathbf{x}'\right)\left(1 \ -\ \epsilon\right) \hspace{-2mm}&\leq&\hspace{-2mm}  k\left(\mathbf{x},\mathbf{x}'\right)\nonumber\\
\hspace{-2mm}&\leq&\hspace{-2mm} k_\ast\left(\mathbf{x},\mathbf{x}'\right)\left(1 \ +\  \epsilon\right)  \label{eq:b2}
\end{eqnarray}
Thus, by Lemma~\ref{lem:1}, Eq.~\eqref{eq:b2} subsequently implies:
\begin{eqnarray}
\hspace{-8mm}\forall (\mathbf{x},\mathbf{x}'): \Big|\mathbf{D}(\mathbf{x}, \mathbf{x}') - \mathbf{D}_\ast(\mathbf{x},\mathbf{x}')\Big| \hspace{-2mm}&\leq&\hspace{-2mm} 2\log\left(\frac{1}{1 - \epsilon}\right)\label{eq:b3}
\end{eqnarray}
\end{proof}

\section{Proof of Theorem~\ref{theo:1}}
\label{app:c}
To prove Theorem~\ref{theo:1}, we need to first establish the following auxiliary results:\\

\begin{lemma}
\label{lem:3}
Let $\mathbf{r} \sim \mathbb{N}(0, \mathbf{K}_\ast)$ and $\mathbf{A}$ denote a positive semi-definite and symmetric matrix. We have:
\begin{eqnarray}
\hspace{-8mm}\forall \lambda > 0:\ \mathbb{E}\left[\mathrm{exp}\left(\lambda\mathbf{r}^\top\mathbf{A}\mathbf{r}\right)\right] &=& \left|\mathbf{I} - 2\lambda\mathbf{A}\mathbf{K}_\ast\right|^{-\frac{1}{2}} \label{eq:c1}
\end{eqnarray}
where the expectation is over the distribution of $\mathbf{r}$. 
\end{lemma}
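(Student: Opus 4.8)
The plan is to evaluate the expectation directly as a Gaussian integral and recognize the result as a ratio of Gaussian normalizing constants. Since $\mathbf{r} \sim \mathbb{N}(\mathbf{0}, \mathbf{K}_\ast)$ has density proportional to $\mathrm{exp}(-\tfrac{1}{2}\mathbf{r}^\top \mathbf{K}_\ast^{-1}\mathbf{r})$, I would first write
\begin{eqnarray}
\mathbb{E}\Big[\mathrm{exp}\Big(\lambda\, \mathbf{r}^\top \mathbf{A}\mathbf{r}\Big)\Big] &=& \frac{1}{(2\pi)^{n/2}\big|\mathbf{K}_\ast\big|^{1/2}} \int \mathrm{exp}\Big(-\tfrac{1}{2}\mathbf{r}^\top \mathbf{M}\, \mathbf{r}\Big)\, d\mathbf{r} \nonumber
\end{eqnarray}
where $\mathbf{M} \triangleq \mathbf{K}_\ast^{-1} - 2\lambda\mathbf{A}$ collects the two quadratic terms in the exponent. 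The integrand is then an unnormalized centered Gaussian density with precision matrix $\mathbf{M}$, so the integral evaluates to $(2\pi)^{n/2}\big|\mathbf{M}\big|^{-1/2}$, provided $\mathbf{M}$ is positive definite so the integral converges.

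Next I would collapse the two determinant factors. Substituting the integral value gives $\mathbb{E}[\cdots] = \big|\mathbf{K}_\ast\big|^{-1/2}\big|\mathbf{M}\big|^{-1/2} = \big|\mathbf{K}_\ast\mathbf{M}\big|^{-1/2}$, and a direct expansion yields $\mathbf{K}_\ast\mathbf{M} = \mathbf{I} - 2\lambda\mathbf{K}_\ast\mathbf{A}$. The last step is the commuting-determinant (Sylvester) identity $\big|\mathbf{I} - 2\lambda\mathbf{K}_\ast\mathbf{A}\big| = \big|\mathbf{I} - 2\lambda\mathbf{A}\mathbf{K}_\ast\big|$, which matches the form $\big|\mathbf{I} - 2\lambda\mathbf{A}\mathbf{K}_\ast\big|^{-1/2}$ in the statement. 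This chain of algebraic manipulations is entirely routine.

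The main obstacle, and the only place where genuine care is required, is the convergence/positive-definiteness condition on $\mathbf{M}$: the integral (hence the moment generating function) is finite only when $\mathbf{K}_\ast^{-1} - 2\lambda\mathbf{A} \succ \mathbf{0}$, equivalently when $2\lambda$ stays below $1/\lambda_{\max}(\mathbf{K}_\ast^{1/2}\mathbf{A}\mathbf{K}_\ast^{1/2})$. To make this transparent and rigorous, I would whiten the variable: set $\mathbf{r} = \mathbf{K}_\ast^{1/2}\mathbf{z}$ with $\mathbf{z} \sim \mathbb{N}(\mathbf{0},\mathbf{I})$, so that $\mathbf{r}^\top\mathbf{A}\mathbf{r} = \mathbf{z}^\top\mathbf{B}\mathbf{z}$ with $\mathbf{B} \triangleq \mathbf{K}_\ast^{1/2}\mathbf{A}\mathbf{K}_\ast^{1/2}$ symmetric and positive semi-definite. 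Diagonalizing $\mathbf{B}$ and using the rotational invariance of $\mathbf{z}$ factorizes the expectation into independent chi-square contributions $\prod_i (1 - 2\lambda\beta_i)^{-1/2}$, each finite iff $2\lambda\beta_i < 1$, whose product is exactly $\big|\mathbf{I} - 2\lambda\mathbf{B}\big|^{-1/2} = \big|\mathbf{I} - 2\lambda\mathbf{A}\mathbf{K}_\ast\big|^{-1/2}$ (again by Sylvester). This whitening route pins down precisely the range of $\lambda$ over which the stated identity holds, and I would accordingly read the phrase ``for all $\lambda > 0$'' as holding wherever $\mathbf{I} - 2\lambda\mathbf{A}\mathbf{K}_\ast$ remains positive definite.
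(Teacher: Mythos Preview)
Your proposal is correct and follows essentially the same route as the paper: write the expectation as a Gaussian integral, combine the quadratic exponents into a single precision matrix $\mathbf{K}_\ast^{-1}-2\lambda\mathbf{A}$, apply the Gaussian normalizing constant, and simplify the resulting determinant ratio to $\big|\mathbf{I}-2\lambda\mathbf{A}\mathbf{K}_\ast\big|^{-1/2}$. Your additional whitening argument, which pins down the range of $\lambda$ for which the integral converges, is a welcome refinement that the paper's own proof omits.
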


\begin{proof}
Note that for any $\mathbf{r} \sim \mathbb{N}(0, \mathbf{B})$ where $\mathbf{B}$ is a symmetric, positive semi-definite matrix,
\begin{eqnarray}
\hspace{-12mm}p(\mathbf{r}) &=& \left(2\pi\right)^{-\frac{n}{2}} \left|\mathbf{B}\right|^{-\frac{1}{2}} \mathrm{exp}\left(-\frac{1}{2}\mathbf{r}^\top\mathbf{B}^{-1}\mathbf{r}\right) \ .\label{eq:c2}
\end{eqnarray}
This also implies
\begin{eqnarray}
\hspace{-19mm}\int_{\mathbf{r}}\mathrm{exp}\left(-\frac{1}{2}\mathbf{r}^\top\mathbf{B}^{-1}\mathbf{r}\right)\mathrm{d}\mathbf{r} &=& \left(2\pi\right)^{\frac{n}{2}} \left|\mathbf{B}\right|^{\frac{1}{2}} \label{eq:c3}
\end{eqnarray}
since $p(\mathbf{r})$ must integrate to one. On the other hand, we can expand $\mathbb{E}\left[\mathrm{exp}\left(\lambda\mathbf{r}^\top\mathbf{A}\mathbf{r}\right)\right]$ 
\begin{eqnarray}
\hspace{-3.5mm}&=&\hspace{-3mm} \left(2\pi\right)^{-\frac{n}{2}}\left|\mathbf{K}_\ast\right|^{-\frac{1}{2}}\hspace{-2mm}\int_{\mathbf{r}}\mathrm{exp}\left(\lambda\mathbf{r}^\top\mathbf{A}\mathbf{r}\right)\mathrm{exp}\left(-\frac{1}{2}\mathbf{r}^\top\mathbf{K}_\ast^{-1}\mathbf{r}\right)\mathrm{d}\mathbf{r} \nonumber\\
\hspace{-3.5mm}&=&\hspace{-3mm} \left(2\pi\right)^{-\frac{n}{2}}\left|\mathbf{K}_\ast\right|^{-\frac{1}{2}}\hspace{-2mm}\int_{\mathbf{r}}\mathrm{exp}\left(-\frac{1}{2}\mathbf{r}^\top\left(\mathbf{K}_\ast^{-1} - 2\lambda\mathbf{A}\right)\mathbf{r}\right)\mathrm{d}\mathbf{r}\nonumber\\
\hspace{-3.5mm}&=&\hspace{-3mm} \left(2\pi\right)^{-\frac{n}{2}}\left|\mathbf{K}_\ast\right|^{-\frac{1}{2}}\left(2\pi\right)^{\frac{n}{2}}\left|\mathbf{B}\right|^{\frac{1}{2}} = \left|\mathbf{K}_\ast\right|^{-\frac{1}{2}}\left|\mathbf{B}\right|^{\frac{1}{2}} \label{eq:c4}
\end{eqnarray}
where $\mathbf{B} = (\mathbf{K}_\ast^{-1} - 2\lambda\mathbf{A})^{-1}$ and the second last step above follows from Eq.~\eqref{eq:c3}. Furthermore, since $\mathbf{B} = (\mathbf{K}_\ast^{-1} - 2\lambda\mathbf{A})^{-1} = ((\mathbf{I} - 2\lambda\mathbf{A}\mathbf{K}_\ast)\mathbf{K}_\ast^{-1})^{-1} = \mathbf{K}(\mathbf{I} - 2\lambda\mathbf{A}\mathbf{K}_\ast)^{-1}$, it follows that $|\mathbf{B}| = |\mathbf{K}_\ast||\mathbf{I} - 2\lambda\mathbf{A}\mathbf{K}_\ast|^{-1}$. Plugging this into Eq.~\eqref{eq:c4} yields
\begin{eqnarray}
\hspace{-1.5mm}\mathbb{E}\left[\mathrm{exp}\left(\lambda\mathbf{r}^\top\mathbf{A}\mathbf{r}\right)\right] &=& \left|\mathbf{K}_\ast\right|^{-\frac{1}{2}}\left|\mathbf{B}\right|^{\frac{1}{2}} \nonumber\\
&=&  \left|\mathbf{K}_\ast\right|^{-\frac{1}{2}}\left|\mathbf{K}_\ast\right|^{\frac{1}{2}}|\mathbf{I} - 2\lambda\mathbf{A}\mathbf{K}_\ast|^{-\frac{1}{2}} \nonumber\\
&=& |\mathbf{I} - 2\lambda\mathbf{A}\mathbf{K}_\ast|^{-\frac{1}{2}} \ .\label{eq:c5}
\end{eqnarray}
As the above is true for all $\lambda > 0$, our proof is completed.
\end{proof}

\begin{lemma}
\label{lem:4}
Let $\mathbf{r} \sim \mathbb{N}(0, \mathbf{K}_\ast)$ where $\mathbf{r} = [r(\mathbf{x}_1), r(\mathbf{x}_2), \ldots, r(\mathbf{x}_n)]$. Suppose we use $\mathbf{r}$ as observations of a surrogate feedback to fit our self-supervised GP model in Section~\ref{sec:method-ssl} and let $\widehat{\mathbf{r}}$ denote the prediction made by the fitted GP at the same set of training inputs $\{\mathbf{x}_1, \mathbf{x}_2, \ldots, \mathbf{x}_n\}$, we have
\begin{eqnarray}
\left(\mathbf{r} - \widehat{\mathbf{r}}\right)^\top\hspace{-2mm}\mathbf{A}\left(\mathbf{r} - \widehat{\mathbf{r}}\right) \hspace{-2mm}&=&\hspace{-2mm} \sigma^4\mathbf{r}^\top\hspace{-2mm}\left(\mathbf{K} + \sigma^2\mathbf{I}\right)^{-1}\hspace{-2mm}\mathbf{A}\left(\mathbf{K} + \sigma^2\mathbf{I}\right)^{-1}\mathbf{r} \nonumber\\ \label{eq:c6}
\end{eqnarray}
where $\mathbf{A}$ is a square matrix and $\sigma^2$ is the variance of the GP likelihood as defined in Eq.~\eqref{eq:2}.
\end{lemma}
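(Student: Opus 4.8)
The plan is to obtain an explicit closed form for the residual vector $\mathbf{r} - \widehat{\mathbf{r}}$ and substitute it directly into the quadratic form; the Gaussian assumption on $\mathbf{r}$ plays no role here, so the identity holds as pure matrix algebra. First I would recall the GP predictive mean from Eq.~\eqref{eq:2}: at a test input $\mathbf{x}_\ast$, the posterior mean is $\mathbf{k}_\ast^\top(\mathbf{K} + \sigma^2\mathbf{I})^{-1}\mathbf{r}$, where $\mathbf{k}_\ast$ collects the kernel values between $\mathbf{x}_\ast$ and the training inputs. The crucial observation is that when this prediction is evaluated \emph{at each of the training inputs} $\mathbf{x}_1, \ldots, \mathbf{x}_n$, the vector $\mathbf{k}_\ast$ associated with $\mathbf{x}_i$ is exactly the $i$-th column of the Gram matrix $\mathbf{K}$. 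Stacking the $n$ predictions and invoking the symmetry of $\mathbf{K}$ therefore yields the compact matrix form $\widehat{\mathbf{r}} = \mathbf{K}(\mathbf{K} + \sigma^2\mathbf{I})^{-1}\mathbf{r}$.

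The second step is a short algebraic simplification of the residual. Writing $\mathbf{I} = (\mathbf{K} + \sigma^2\mathbf{I})(\mathbf{K} + \sigma^2\mathbf{I})^{-1}$ and factoring out the common inverse gives
\begin{eqnarray}
\mathbf{r} - \widehat{\mathbf{r}} &=& \Big(\mathbf{I} - \mathbf{K}\big(\mathbf{K} + \sigma^2\mathbf{I}\big)^{-1}\Big)\mathbf{r} \nonumber\\
&=& \sigma^2\big(\mathbf{K} + \sigma^2\mathbf{I}\big)^{-1}\mathbf{r} \ , \nonumber
\end{eqnarray}
since $\mathbf{K} + \sigma^2\mathbf{I} - \mathbf{K} = \sigma^2\mathbf{I}$.

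Finally I would substitute this residual into $(\mathbf{r} - \widehat{\mathbf{r}})^\top\mathbf{A}(\mathbf{r} - \widehat{\mathbf{r}})$, pull the scalar $\sigma^2$ out of each of the two factors to produce the prefactor $\sigma^4$, and transpose the left factor using the symmetry of $(\mathbf{K} + \sigma^2\mathbf{I})^{-1}$. This lands exactly on $\sigma^4\,\mathbf{r}^\top(\mathbf{K} + \sigma^2\mathbf{I})^{-1}\mathbf{A}(\mathbf{K} + \sigma^2\mathbf{I})^{-1}\mathbf{r}$, which is the claimed Eq.~\eqref{eq:c6}. I do not anticipate a genuine obstacle: the only point requiring care is the first step, namely recognizing that predicting at the \emph{training} inputs collapses the per-point cross-covariance vectors into the full Gram matrix $\mathbf{K}$, so that $\widehat{\mathbf{r}} = \mathbf{K}(\mathbf{K} + \sigma^2\mathbf{I})^{-1}\mathbf{r}$; everything afterward is elementary. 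Note also that no assumption on $\mathbf{A}$ beyond being square is needed, since symmetry is invoked only for $(\mathbf{K} + \sigma^2\mathbf{I})^{-1}$.
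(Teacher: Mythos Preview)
Your proposal is correct and follows essentially the same route as the paper: derive $\widehat{\mathbf{r}} = \mathbf{K}(\mathbf{K}+\sigma^2\mathbf{I})^{-1}\mathbf{r}$ from the predictive mean at the training inputs, simplify the residual to $\sigma^2(\mathbf{K}+\sigma^2\mathbf{I})^{-1}\mathbf{r}$, and substitute into the quadratic form. The only cosmetic difference is that the paper reaches the residual identity by adding and subtracting $\sigma^2\mathbf{I}(\mathbf{K}+\sigma^2\mathbf{I})^{-1}\mathbf{r}$, whereas you factor via $\mathbf{I} = (\mathbf{K}+\sigma^2\mathbf{I})(\mathbf{K}+\sigma^2\mathbf{I})^{-1}$; both are the same computation.
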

\begin{proof}
Applying Eq.~\eqref{eq:2} on $\mathbf{x}_\ast = \mathbf{x}_1, \mathbf{x}_2, \ldots, \mathbf{x}_n$ and $\mathbf{y} = \mathbf{r}$ and collecting the results in a column vector $\widehat{\mathbf{r}}$ straight-forwardly yields
\begin{eqnarray}
\widehat{\mathbf{r}} &=& \mathbf{K}\left(\mathbf{K} + \sigma^2\mathbf{I}\right)^{-1}\mathbf{r} \ , \label{eq:c7}
\end{eqnarray}
which further implies
\begin{eqnarray}
\mathbf{r} - \widehat{\mathbf{r}} &=& \mathbf{r} - \mathbf{K}\left(\mathbf{K} + \sigma^2\mathbf{I}\right)^{-1}\mathbf{r} \\
&=& \mathbf{r} - \mathbf{K}\left(\mathbf{K} + \sigma^2\mathbf{I}\right)^{-1}\mathbf{r} -\sigma^2\mathbf{I}\left(\mathbf{K} + \sigma^2\mathbf{I}\right)^{-1}\mathbf{r}\nonumber\\ 
&+& \sigma^2\mathbf{I}\left(\mathbf{K} + \sigma^2\mathbf{I}\right)^{-1}\mathbf{r}\\
&=& \mathbf{r} - \left(\mathbf{K} + \sigma^2\mathbf{I}\right)\left(\mathbf{K} + \sigma^2\mathbf{I}\right)^{-1}\mathbf{r} \nonumber\\ 
&+& \sigma^2\left(\mathbf{K} + \sigma^2\mathbf{I}\right)^{-1}\mathbf{r}\\
&=& \mathbf{r} - \mathbf{r} + \sigma^2\left(\mathbf{K} + \sigma^2\mathbf{I}\right)^{-1}\mathbf{r} \nonumber\\ 
&=& \sigma^2\left(\mathbf{K} + \sigma^2\mathbf{I}\right)^{-1}\mathbf{r} \ .\label{eq:c8}
\end{eqnarray}
Finally, plugging Eq.~\eqref{eq:c8} into the expression of $\left(\mathbf{r} - \widehat{\mathbf{r}}\right)^\top\mathbf{A}\left(\mathbf{r} - \widehat{\mathbf{r}}\right)$ produces the desired result.
\end{proof}

Given the above results in Lemma~\ref{lem:3} and Lemma~\ref{lem:4}, we are now ready to (re-)state and prove the key result in Theorem~\ref{theo:1} below.

{\bf Theorem 1}
Let $\displaystyle g(\tau) \triangleq \log(\tau) + (1/\tau) - 1$ and $\displaystyle c_{\epsilon} \triangleq \epsilon\lambda_{\max}\Big(\mathbf{K}_\ast\Big)/d$, we have
\begin{eqnarray}
\hspace{-10mm}&\ &\mathrm{Pr}\Bigg(\underset{(\mathbf{x},\mathbf{x}')}{\mathrm{sup}}\ \left|\frac{k(\mathbf{x},\mathbf{x}') - k_\ast(\mathbf{x},\mathbf{x}')}{k_\ast(\mathbf{x},\mathbf{x}')}\right| \ \geq\  \epsilon\Bigg)\nonumber\\ 
\hspace{-10mm}&\leq& \mathrm{exp}\left(-\frac{1}{2}n \cdot g\left(\frac{\sigma^4}{\alpha}\Big(1 - c_{\epsilon}\Big)\lambda_{\max}\Big(\mathbf{K}_\ast\Big)\right)\right) \label{eq:c9}
\end{eqnarray}
where $d$ and $\alpha$ are defined in {\bf A1} and {\bf A2} above. 
\begin{proof}
To prove the above result, note that
\begin{eqnarray}
\hspace{-10mm}&\ &\hspace{-2mm}\mathrm{Pr}\left(\underset{(\mathbf{x},\mathbf{x}')}{\mathrm{sup}}\ \left|\frac{k(\mathbf{x},\mathbf{x}') - k_\ast(\mathbf{x},\mathbf{x}')}{k_\ast(\mathbf{x},\mathbf{x}')}\right| \ \geq \ \epsilon\right) \nonumber\\
\hspace{-10mm}&=&\hspace{-2mm} \mathrm{Pr}\Bigg(\Big|k(\mathbf{x}_a,\mathbf{x}_b) - k_\ast(\mathbf{x}_a,\mathbf{x}_b)\Big| \ \geq \ \epsilon \cdot k_\ast(\mathbf{x}_a,\mathbf{x}_b)\Bigg) \label{eq:c10}
\end{eqnarray}
where $(\mathbf{x}_a,\mathbf{x}_b)$ is any pair of inputs for which the corresponding multiplicative approximation error meets the defined supremum value. Thus, let $\mathbb{E}$ be the event that $\Big|k(\mathbf{x}_a,\mathbf{x}_b) - k_\ast(\mathbf{x}_a,\mathbf{x}_b)\Big| \ \geq \ \epsilon \cdot k_\ast(\mathbf{x}_a,\mathbf{x}_b)$. By assumptions {\bf A1} and {\bf A2}, we have $\mathrm{Pr}(\mathbb{E})$
\begin{eqnarray}
\hspace{-6mm}&\leq&\hspace{-2mm} \mathrm{Pr}\Bigg(\underset{(\mathbf{x},\mathbf{x}')}{\mathrm{sup}}\ \Big|k(\mathbf{x},\mathbf{x}') - k_\ast(\mathbf{x},\mathbf{x}')\Big| \ \geq \ \epsilon \cdot k_\ast(\mathbf{x}_a,\mathbf{x}_b)\Bigg)\nonumber\\
\hspace{-6mm}&\leq&\hspace{-2mm} \mathrm{Pr}\Bigg(\underset{(\mathbf{x},\mathbf{x}')}{\mathrm{sup}}\ \Big|k(\mathbf{x},\mathbf{x}') - k_\ast(\mathbf{x},\mathbf{x}')\Big| \ \geq \ \epsilon \cdot \frac{\lambda_{\max}\left(\mathbf{K}_\ast\right)}{d}\Bigg)\nonumber\\
\hspace{-6mm}&\leq&\hspace{-2mm} \mathrm{Pr}\Bigg(\frac{(\mathbf{r} - \widehat{\mathbf{r}})^\top\mathbf{A}(\mathbf{r} - \widehat{\mathbf{r}})-\alpha}{(\mathbf{r} - \widehat{\mathbf{r}})^\top\mathbf{A}(\mathbf{r} - \widehat{\mathbf{r}})} \ \geq \ c_{\epsilon}\Bigg) \nonumber\\
\hspace{-6mm}&=&\hspace{-2mm} \mathrm{Pr}\Bigg((1 - c_{\epsilon})(\mathbf{r} - \widehat{\mathbf{r}})^\top\mathbf{A}(\mathbf{r} - \widehat{\mathbf{r}}) \ \geq \ \alpha\Bigg) \label{eq:c11}
\end{eqnarray}
with $c_{\epsilon} = \epsilon \cdot \lambda_{\max}(\mathbf{K}_\ast) / d$ and $\mathbf{A} = (1/n)(\mathbf{K} + \sigma^2\mathbf{I})^2$ as defined in {\bf A2}. Here, the second and third inequalities in the above follow from {\bf A1} and {\bf A2}, respectively. Next, applying Lemma~\ref{lem:4} to the RHS of Eq.~\eqref{eq:c11} with $\mathbf{A} = (1/n)(\mathbf{K} + \sigma^2\mathbf{I})^2$, it follows that $\forall \lambda > 0$:
\begin{eqnarray}
\mathrm{Pr}\Big(\mathbb{E}\Big) \hspace{-2.5mm}&\leq&\hspace{-2.5mm} \mathrm{Pr}\Bigg((1 - c_{\epsilon})(\mathbf{r} - \widehat{\mathbf{r}})^\top\mathbf{A}(\mathbf{r} - \widehat{\mathbf{r}}) \geq \alpha\Bigg) \nonumber\\
\hspace{-2.5mm}&=&\hspace{-2.5mm} \mathrm{Pr}\Bigg(\frac{\sigma^4}{n}(1 - c_{\epsilon})\mathbf{r}^\top\mathbf{r} \geq \alpha\Bigg)\nonumber\\
\hspace{-2.5mm}&=&\hspace{-2.5mm} \mathrm{Pr}\Bigg(\mathrm{exp}\left(\lambda \cdot \frac{\sigma^4}{n}(1 - c_{\epsilon})\mathbf{r}^\top\mathbf{r}\right) \geq \mathrm{exp}\left(\lambda \cdot \alpha\right) \Bigg)\nonumber\\
\hspace{-2.5mm}&\leq&\hspace{-2.5mm} \mathbb{E}\left[\mathrm{exp}\left(\lambda \cdot \frac{\sigma^4}{n}(1 - c_{\epsilon})\mathbf{r}^\top\mathbf{r}\right)\right]\mathrm{exp}\left(-\lambda \cdot \alpha\right)\nonumber\\
\hspace{-2.5mm}&=&\hspace{-2.5mm} \left|\mathbf{I} - 2\lambda \cdot \frac{\sigma^4}{n}(1 - c_{\epsilon})\mathbf{K}_\ast\right|^{-\frac{1}{2}}\mathrm{exp}\left(-\lambda \cdot \alpha\right)\nonumber\\
\hspace{-2.5mm}&=&\hspace{-2.5mm} \mathrm{exp}\left(\hspace{-1mm}-\frac{1}{2}\log\left|\mathbf{I} - 2\lambda \cdot \frac{\sigma^4}{n}(1 - c_{\epsilon})\mathbf{K}_\ast\right| \hspace{-1mm}-\hspace{-1mm} \lambda \cdot \alpha\right) \nonumber\nonumber\\
\hspace{-2.5mm}&=&\hspace{-2.5mm}\mathrm{exp}\left(\mathrm{F}(\lambda)\right) \label{eq:c12}
\end{eqnarray}
where $\mathrm{F}(\lambda) = -\frac{1}{2}\log\left|\mathbf{I} - 2\lambda \cdot \frac{\sigma^4}{n}(1 - c_{\epsilon})\mathbf{K}_\ast\right| - \lambda \cdot \alpha$. Here, the second step above follows from Lemma~\ref{lem:4}. The fourth step follows from the Markov inequality and the fifth step results from applying Lemma~\ref{lem:3} when we replace $\lambda$ (in Lemma~\ref{lem:3}) by $\lambda \cdot \frac{\sigma^4}{n}(1 - c_{\epsilon})$ and $\mathbf{A} = \mathbf{I}$. 

Furthermore, we also note that Eq.~\eqref{eq:c12} holds for all $\lambda > 0$, we can tighten the bound on the RHS by solving for $\lambda$ that minimizes $\mathrm{F}(\lambda)$. To do this, we set $\mathrm{d}\mathrm{F}(\lambda)/\mathrm{d}\lambda = 0$ and solve for $\lambda$ which results in
\begin{eqnarray}
\lambda &=& \frac{n}{2}\cdot\frac{1}{\sigma^4}\cdot\frac{1}{1 - c_{\epsilon}}\cdot\frac{1}{\lambda_{\max}(\mathbf{K}_\ast)}\nonumber\\
&\times&\left(1 - \frac{\sigma^4}{\alpha} \cdot (1 - c_{\epsilon})\cdot \lambda_{\max}(\mathbf{K}_\ast)\right) \label{eq:c13}
\end{eqnarray}
Thus, plugging this optimal value of $\lambda$ into the RHS of Eq.~\eqref{eq:c12} yields 
\begin{eqnarray}
\hspace{-7mm}\mathrm{Pr}\Big(\mathbb{E}\Big) \hspace{-2mm}&\leq&\hspace{-2mm} \mathrm{exp}\left(-\frac{n}{2} \cdot g\left(\frac{\sigma^4}{\alpha}(1 - c_{\epsilon})\lambda_{\max}\Big(\mathbf{K}_\ast\Big)\right)\right) \label{eq:c14}
\end{eqnarray}
where $g(\tau) = \log(\tau) + (1/\tau) - 1$. From Eq.~\eqref{eq:c10} and the definition of $\mathbb{E}$, we have
\begin{eqnarray}
\hspace{-12mm}\mathrm{Pr}\left(\mathrm{sup}\ \left|\frac{k(\mathbf{x},\mathbf{x}') - k_\ast(\mathbf{x},\mathbf{x}')}{k_\ast(\mathbf{x},\mathbf{x}')}\right| \geq \epsilon\right) \hspace{-2mm}&=&\hspace{-2mm} \mathrm{Pr}\Big(\mathbb{E}\Big) \label{eq:c15}
\end{eqnarray}
Then, combining this with Eq.~\eqref{eq:c14} above yields the desired result.
\end{proof}

\section{Proof of Theorem~\ref{theo:2}}
\label{app:d}
{\bf Theorem 2}
Let $\displaystyle g(\tau) = \log(\tau) + (1/\tau) - 1$ and $\displaystyle g_{\epsilon} = g\left(\frac{\sigma^4}{\alpha}\left(1 - \lambda_{\max}\Big(\mathbf{K}_\ast\Big)\frac{\epsilon}{d}\right)\lambda_{\max}\Big(\mathbf{K}_\ast\Big)\right)$. Then,
\begin{eqnarray}
\mathrm{Pr}\Bigg(\underset{(\mathbf{x},\mathbf{x}')}{\mathrm{sup}}\ \Big|\mathbf{D}(\mathbf{x},\mathbf{x}') - \mathbf{D}_\ast(\mathbf{x},\mathbf{x}')\Big| &\leq& 2\log\left(\frac{1}{1 - \epsilon}\right)\Bigg) \nonumber\\
&\geq& 1 - \delta \label{eq:d1}
\end{eqnarray}
when $\displaystyle n \geq \frac{2}{g_{\epsilon}}\log\frac{1}{\delta}$ and $\delta \in (0, 1)$ is an arbitrarily small confidence parameter. 

\begin{proof}
Setting $\mathrm{exp}\left(-\frac{n}{2} \cdot g\left(\frac{\sigma^4}{\alpha}(1 - c_{\epsilon})\lambda_{\max}\Big(\mathbf{K}_\ast\Big)\right)\right) \ \leq\ \delta$ implies
\begin{eqnarray}
\hspace{-12mm}\mathrm{Pr}\left(\underset{(\mathbf{x},\mathbf{x}')}{\mathrm{sup}}\ \left|\frac{k(\mathbf{x},\mathbf{x}') - k_\ast(\mathbf{x},\mathbf{x}')}{k_\ast(\mathbf{x},\mathbf{x}')}\right| \ \geq \ \epsilon\right) &\leq& \delta \label{eq:d2}
\end{eqnarray}
via Theorem~\ref{theo:1} or equivalently,
\begin{eqnarray}
\hspace{-9.5mm}\mathrm{Pr}\left(\underset{(\mathbf{x},\mathbf{x}')}{\mathrm{sup}}\ \left|\frac{k(\mathbf{x},\mathbf{x}') - k_\ast(\mathbf{x},\mathbf{x}')}{k_\ast(\mathbf{x},\mathbf{x}')}\right| \ \leq \ \epsilon\right) \hspace{-2mm}&\geq&\hspace{-2mm} 1 - \delta \label{eq:d3}
\end{eqnarray}
That is, with probability at least $1 - \delta$, 
\begin{eqnarray}
\underset{(\mathbf{x},\mathbf{x}')}{\mathrm{sup}}\ \left|\frac{k(\mathbf{x},\mathbf{x}') - k_\ast(\mathbf{x},\mathbf{x}')}{k_\ast(\mathbf{x},\mathbf{x}')}\right| &\leq& \epsilon \label{eq:d4}
\end{eqnarray}
which in turn implies
\begin{eqnarray}
\hspace{-11mm}\underset{(\mathbf{x},\mathbf{x}')}{\mathrm{sup}}\ \Big|\mathbf{D}(\mathbf{x},\mathbf{x}') - \mathbf{D}_\ast(\mathbf{x}, \mathbf{x}')\Big| &\leq& 2\log\left(\frac{1}{1 - \epsilon}\right) \label{eq:d5}
\end{eqnarray}
via Lemma~\ref{lem:2}. This also means
\begin{eqnarray}
\mathrm{Pr}\Bigg(\underset{(\mathbf{x},\mathbf{x}')}{\mathrm{sup}}\ \Big|\mathbf{D}(\mathbf{x},\mathbf{x}') - \mathbf{D}_\ast(\mathbf{x},\mathbf{x}')\Big| &\leq& 2\log\left(\frac{1}{1 - \epsilon}\right)\Bigg) \nonumber\\
&\geq& 1 - \delta \ .\label{eq:d6}
\end{eqnarray}
As this happens when 
\begin{eqnarray}
\hspace{-10.2mm}\mathrm{exp}\left(-\frac{n}{2} \cdot g\left(\frac{\sigma^4}{\alpha}(1 - c_{\epsilon})\lambda_{\max}\Big(\mathbf{K}_\ast\Big)\right)\right) &\leq& \delta \ ,
\end{eqnarray}
we have
\begin{eqnarray}
\hspace{-9mm}\log\frac{1}{\delta} \hspace{-2mm}&\leq&\hspace{-2mm} \frac{n}{2} \cdot g\left(\frac{\sigma^4}{\alpha}(1 - c_{\epsilon})\lambda_{\max}\Big(\mathbf{K}_\ast\Big)\right) = \frac{n}{2} \cdot g_{\epsilon} \label{eq:d7}
\end{eqnarray}
which implies $n$ must be at least $\displaystyle\frac{2}{g_{\epsilon}}\log\frac{1}{\delta}$ as desired. 
\end{proof}

\section{Sparse Gaussian Processes}
\label{app:e}
To improve the scalability of GP model, numerous sparse GP methods \citep{NghiaAAAI17,Hoang2020,NghiaAAAI19,NghiaAAAI18,Miguel10,Candela07,Snelson07,Titsias09,NghiaIJCNN19} have been proposed to reduce its cubic processing cost to linear in the size of data. One common recipe that is broadly adopted by these works is the exploitation of a low-rank approximate representation of the covariance (or Gram) matrix in characterizing the GP prior. 

The work of \citet{Candela05} has in fact presented a unifying view of such methods, which share a similar structural assumption of conditional independence (albeit of varying degrees) based on a notion of inducing variables \citep{Tresp03,Seeger03,Smola01,Snelson06,Snelson07a}. In particular, under this assumption, there exists a subset of supporting inputs $\mathbb{S} = \{\mathbf{x}_1, \mathbf{x}_2, \ldots, \mathbf{x}_m\}$ such that conditioning on their (latent) outputs $\mathbf{g}_+ = [g(\mathbf{x}_1)) \ldots g(\mathbf{x}_m)]^\top$, the input space can be partitioned into regions such that if $\mathbf{x}$ and $\mathbf{x}'$ belong to two different regions, then $g(\mathbf{x})$ and $g(\mathbf{x}')$ are statistically independent.

Depending on the specific form the assumed conditional independence, which entails how the input space is partitioned and is different across different methods, the exact covariance matrix $\mathbf{K}$ can be shown to be equivalent to either $\mathbf{Q}$ \citep{Seeger03,Smola01}, $\mathbf{Q} - \mathrm{diag}[\mathbf{Q} - \mathbf{K}]$ \cite{Snelson06} or $\mathbf{Q} - \mathrm{blkdiag}[\mathbf{Q} - \mathbf{K}]$ \citep{Tresp03,Snelson07} (among others). Here, common to all these approximations is the low-rank matrix $\mathbf{Q}$ whose entries $\mathbf{Q}_{ab} \triangleq \mathbf{k}_a^\top \mathbf{K}_{++}^{-1} \mathbf{k}_b$ where $\mathbf{k}_{a} \triangleq [k(\mathbf{x}_a, \mathbf{x}_1) \ldots k(\mathbf{x}_a, \mathbf{x}_m)]^\top$, $\mathbf{k}_{b} \triangleq [k(\mathbf{x}_b, \mathbf{x}_1) \ldots k(\mathbf{x}_b, \mathbf{x}_m)]^\top$ and $\mathbf{K}_{++}$ is the Gram matrix induced by $k(\mathbf{x}, \mathbf{x}')$ on $\mathbb{S} = \{\mathbf{x}_1, \mathbf{x}_2, \ldots, \mathbf{x}_m\}$. 

By its construction, it is straight-forward to see that $\mathrm{rank}(\mathbf{Q}) \leq m$ and as such, plugging $\mathbf{Q}$ into one of the above approximations of $\mathbf{K}$ and replacing $\mathbf{K}$ in Eq.~\eqref{eq:3} with the approximation would result in an expression that is computable in $\mathbf{O}(nm^2)$ -- via the Woodburry matrix inversion identity -- which is linear in $n$. For interested readers, the exact derivation of these approximations can be found in \citep{Candela05}. In our self-supervised metric learning experiment (Section~\ref{sec:method-ssl}), we adopt the simplest approximation with $\mathbf{Q}$ which is sufficient to scale with datasets spanning tens of thousands of items.

{\bf Remark.} The approximation with $\mathbf{Q}$ \citep{Smola01} is later rediscovered as the result performing variational inference to approximate the tractable (but otherwise costly) GP prediction \citep{Titsias09}. This results in an alternative variational perspective that unifies a broader spectrum (including the above) of sparse GPs. Under this new view, the above approximations can be reproduced as the exact result of performing variational inference on a GP with modified prior \citep{NghiaICML15} or noise covariance \citep{NghiaICML16}.

\end{document}